\newtheorem{theorem}{Theorem}
\newtheorem{acknowledgement}[theorem]{Acknowledgement}
\newtheorem{axiom}[theorem]{Axiom}
\newtheorem{conjecture}[theorem]{Conjecture}
\newtheorem{corollary}[theorem]{Corollary}
\newtheorem{definition}[theorem]{Definition}
\newtheorem{example}[theorem]{Example}
\newtheorem{exercise}[theorem]{Exercise}
\newtheorem{lemma}[theorem]{Lemma}
\newtheorem{proposition}[theorem]{Proposition}
\newtheorem{remark}[theorem]{Remark}
\newenvironment{proof}[1][Proof]{\noindent\textbf{#1.} }{\ \rule{0.5em}{0.5em}}
\let\pdfoutput=\undefined\fi
\chardef\@x10\chardef\@xv60
\def\tcitime{
\def\@time{%
  \@minute\time\@hour\@minute\divide\@hour\@xv
  \ifnum\@hour<\@x 0\fi\the\@hour:%
  \multiply\@hour\@xv\advance\@minute-\@hour
  \ifnum\@minute<\@x 0\fi\the\@minute
  }}%
\def\x@hyperref#1#2#3{%
   \catcode`\~ = 12
   \catcode`\$ = 12
   \catcode`\_ = 12
   \catcode`\# = 12
   \catcode`\& = 12
   \catcode`\% = 12
   \y@hyperref{#1}{#2}{#3}%
}
\def\y@hyperref#1#2#3#4{%
   #2\ref{#4}#3
   \catcode`\~ = 13
   \catcode`\$ = 3
   \catcode`\_ = 8
   \catcode`\# = 6
   \catcode`\& = 4
   \catcode`\% = 14
}
\def\QCTOpt[#1]#2{%
  \def\QCTOptB{#1}
  \def\QCTOptA{#2}
}
\def\QCTNOpt#1{%
  \def\QCTOptA{#1}
  \let\QCTOptB\empty
}
\def\Qct{%
  \@ifnextchar[{%
    \QCTOpt}{\QCTNOpt}
}
\def\QCBOpt[#1]#2{%
  \def\QCBOptB{#1}%
  \def\QCBOptA{#2}%
}
\def\QCBNOpt#1{%
  \def\QCBOptA{#1}%
  \let\QCBOptB\empty
}
\def\Qcb{%
  \@ifnextchar[{%
    \QCBOpt}{\QCBNOpt}%
}
\def\PrepCapArgs{%
  \ifx\QCBOptA\empty
    \ifx\QCTOptA\empty
      {}%
    \else
      \ifx\QCTOptB\empty
        {\QCTOptA}%
      \else
        [\QCTOptB]{\QCTOptA}%
      \fi
    \fi
  \else
    \ifx\QCBOptA\empty
      {}%
    \else
      \ifx\QCBOptB\empty
        {\QCBOptA}%
      \else
        [\QCBOptB]{\QCBOptA}%
      \fi
    \fi
  \fi
}
\def\GRAPHICSPS#1{%
 \ifcase\GRAPHICSTYPE
   \special{ps: #1}%
 \or
   \special{language "PS", include "#1"}%
 \fi
}%
\def\graffile#1#2#3#4{%
    \bgroup
	   \@inlabelfalse
       \leavevmode
       \@ifundefined{bbl@deactivate}{\def~{\string~}}{\activesoff}%
        \raise -#4 \BOXTHEFRAME{%
           \hbox to #2{\raise #3\hbox to #2{\null #1\hfil}}}%
    \egroup
}%
\def\draftbox#1#2#3#4{%
 \leavevmode\raise -#4 \hbox{%
  \frame{\rlap{\protect\tiny #1}\hbox to #2%
   {\vrule height#3 width\z@ depth\z@\hfil}%
  }%
 }%
}%
\let\nographics=\@msidraft
\newif\ifwasdraft
\def\GRAPHIC#1#2#3#4#5{%
   \ifnum\@msidraft=\@ne\draftbox{#2}{#3}{#4}{#5}%
   \else\graffile{#1}{#3}{#4}{#5}%
   \fi
}
\def\addtoLaTeXparams#1{%
    \edef\LaTeXparams{\LaTeXparams #1}}%
\newif\ifBoxFrame \BoxFramefalse
\newif\ifOverFrame \OverFramefalse
\newif\ifUnderFrame \UnderFramefalse
\def\BOXTHEFRAME#1{%
   \hbox{%
      \ifBoxFrame
         \frame{#1}%
      \else
         {#1}%
      \fi
   }%
}
\def\doFRAMEparams#1{\BoxFramefalse\OverFramefalse\UnderFramefalse\readFRAMEparams#1\end}%
\def\readFRAMEparams#1{%
 \ifx#1\end%
  \let\next=\relax
  \else
  \ifx#1i\dispkind=\z@\fi
  \ifx#1d\dispkind=\@ne\fi
  \ifx#1f\dispkind=\tw@\fi
  \ifx#1t\addtoLaTeXparams{t}\fi
  \ifx#1b\addtoLaTeXparams{b}\fi
  \ifx#1p\addtoLaTeXparams{p}\fi
  \ifx#1h\addtoLaTeXparams{h}\fi
  \ifx#1X\BoxFrametrue\fi
  \ifx#1O\OverFrametrue\fi
  \ifx#1U\UnderFrametrue\fi
  \ifx#1w
    \ifnum\@msidraft=1\wasdrafttrue\else\wasdraftfalse\fi
    \@msidraft=\@ne
  \fi
  \let\next=\readFRAMEparams
  \fi
 \next
 }%
\def\IFRAME#1#2#3#4#5#6{%
      \bgroup
      \let\QCTOptA\empty
      \let\QCTOptB\empty
      \let\QCBOptA\empty
      \let\QCBOptB\empty
      #6%
      \parindent=0pt
      \leftskip=0pt
      \rightskip=0pt
      \setbox0=\hbox{\QCBOptA}%
      \@tempdima=#1\relax
      \ifOverFrame
          \typeout{This is not implemented yet}%
          \show\HELP
      \else
         \ifdim\wd0>\@tempdima
            \advance\@tempdima by \@tempdima
            \ifdim\wd0 >\@tempdima
               \setbox1 =\vbox{%
                  \unskip\hbox to \@tempdima{\hfill\GRAPHIC{#5}{#4}{#1}{#2}{#3}\hfill}%
                  \unskip\hbox to \@tempdima{\parbox[b]{\@tempdima}{\QCBOptA}}%
               }%
               \wd1=\@tempdima
            \else
               \textwidth=\wd0
               \setbox1 =\vbox{%
                 \noindent\hbox to \wd0{\hfill\GRAPHIC{#5}{#4}{#1}{#2}{#3}\hfill}\\%
                 \noindent\hbox{\QCBOptA}%
               }%
               \wd1=\wd0
            \fi
         \else
            \ifdim\wd0>0pt
              \hsize=\@tempdima
              \setbox1=\vbox{%
                \unskip\GRAPHIC{#5}{#4}{#1}{#2}{0pt}%
                \break
                \unskip\hbox to \@tempdima{\hfill \QCBOptA\hfill}%
              }%
              \wd1=\@tempdima
           \else
              \hsize=\@tempdima
              \setbox1=\vbox{%
                \unskip\GRAPHIC{#5}{#4}{#1}{#2}{0pt}%
              }%
              \wd1=\@tempdima
           \fi
         \fi
         \@tempdimb=\ht1
         \advance\@tempdimb by -#2
         \advance\@tempdimb by #3
         \leavevmode
         \raise -\@tempdimb \hbox{\box1}%
      \fi
      \egroup%
}%
\def\DFRAME#1#2#3#4#5{%
  \vspace\topsep
  \hfil\break
  \bgroup
     \leftskip\@flushglue
	 \rightskip\@flushglue
	 \parindent\z@
	 \parfillskip\z@skip
     \let\QCTOptA\empty
     \let\QCTOptB\empty
     \let\QCBOptA\empty
     \let\QCBOptB\empty
	 \vbox\bgroup
        \ifOverFrame 
           #5\QCTOptA\par
        \fi
        \GRAPHIC{#4}{#3}{#1}{#2}{\z@}%
        \ifUnderFrame 
           \break#5\QCBOptA
        \fi
	 \egroup
  \egroup
  \vspace\topsep
  \break
}%
\def\FFRAME#1#2#3#4#5#6#7{%
  \@ifundefined{floatstyle}
    {
     \begin{figure}[#1]%
    }
    {
	 \ifx#1h
      \begin{figure}[H]%
	 \else
      \begin{figure}[#1]%
	 \fi
	}
  \let\QCTOptA\empty
  \let\QCTOptB\empty
  \let\QCBOptA\empty
  \let\QCBOptB\empty
  \ifOverFrame
    #4
    \ifx\QCTOptA\empty
    \else
      \ifx\QCTOptB\empty
        \caption{\QCTOptA}%
      \else
        \caption[\QCTOptB]{\QCTOptA}%
      \fi
    \fi
    \ifUnderFrame\else
      \label{#5}%
    \fi
  \else
    \UnderFrametrue%
  \fi
  \begin{center}\GRAPHIC{#7}{#6}{#2}{#3}{\z@}\end{center}%
  \ifUnderFrame
    #4
    \ifx\QCBOptA\empty
      \caption{}%
    \else
      \ifx\QCBOptB\empty
        \caption{\QCBOptA}%
      \else
        \caption[\QCBOptB]{\QCBOptA}%
      \fi
    \fi
    \label{#5}%
  \fi
  \end{figure}%
 }%
\def\makeactives{
  \catcode`\"=\active
  \catcode`\;=\active
  \catcode`\:=\active
  \catcode`\'=\active
  \catcode`\~=\active
}
   \gdef\activesoff{%
      \def"{\string"}%
      \def;{\string;}%
      \def:{\string:}%
      \def'{\string'}%
      \def~{\string~}%
    }
\def\FRAME#1#2#3#4#5#6#7#8{%
 \bgroup
 \ifnum\@msidraft=\@ne
   \wasdrafttrue
 \else
   \wasdraftfalse%
 \fi
 \def\LaTeXparams{}%
 \dispkind=\z@
 \def\LaTeXparams{}%
 \doFRAMEparams{#1}%
 \ifnum\dispkind=\z@\IFRAME{#2}{#3}{#4}{#7}{#8}{#5}\else
  \ifnum\dispkind=\@ne\DFRAME{#2}{#3}{#7}{#8}{#5}\else
   \ifnum\dispkind=\tw@
    \edef\@tempa{\noexpand\FFRAME{\LaTeXparams}}%
    \@tempa{#2}{#3}{#5}{#6}{#7}{#8}%
    \fi
   \fi
  \fi
  \ifwasdraft\@msidraft=1\else\@msidraft=0\fi{}%
  \egroup
 }%
\def\TEXUX#1{"texux"}
\def\limfunc#1{\mathop{\rm #1}}%
\def\func#1{\mathop{\rm #1}\nolimits}%
\long\def\QQQ#1#2{%
     \long\expandafter\def\csname#1\endcsname{#2}}%
\long\def\QQA#1#2{}%
\def\QTR#1#2{{\csname#1\endcsname {#2}}}%
\def\EXPAND#1[#2]#3{}%
\def\NOEXPAND#1[#2]#3{}%
\def\LaTeXparent#1{}%
\def\ChildStyles#1{}%
\def\ChildDefaults#1{}%
\def\QTagDef#1#2#3{}%
  \providecommand{\UNICODE}[2][]{\protect\rule{.1in}{.1in}}
  \providecommand{\U}[1]{\protect\rule{.1in}{.1in}}
\def\QQfnmark#1{\footnotemark}
 \def\abstract{%
  \if@twocolumn
   \section*{Abstract (Not appropriate in this style!)}%
   \else \small 
   \begin{center}{\bf Abstract\vspace{-.5em}\vspace{\z@}}\end{center}%
   \quotation 
   \fi
  }%
   \def\registered{\relax\ifmmode{}\r@gistered
                    \else$\m@th\r@gistered$\fi}%
 \def\r@gistered{^{\ooalign
  {\hfil\raise.07ex\hbox{$\scriptstyle\rm\text{R}$}\hfil\crcr
  \mathhexbox20D}}}}{}%
\newdimen\theight
\def\newfmtname{LaTeX2e}
  \DeclareOldFontCommand{\rm}{\normalfont\rmfamily}{\mathrm}
  \DeclareOldFontCommand{\sf}{\normalfont\sffamily}{\mathsf}
  \DeclareOldFontCommand{\tt}{\normalfont\ttfamily}{\mathtt}
  \DeclareOldFontCommand{\bf}{\normalfont\bfseries}{\mathbf}
  \DeclareOldFontCommand{\it}{\normalfont\itshape}{\mathit}
  \DeclareOldFontCommand{\sl}{\normalfont\slshape}{\@nomath\sl}
  \DeclareOldFontCommand{\sc}{\normalfont\scshape}{\@nomath\sc}
\def\alpha{{\Greekmath 010B}}%
\def\beta{{\Greekmath 010C}}%
\def\gamma{{\Greekmath 010D}}%
\def\delta{{\Greekmath 010E}}%
\def\epsilon{{\Greekmath 010F}}%
\def\zeta{{\Greekmath 0110}}%
\def\eta{{\Greekmath 0111}}%
\def\theta{{\Greekmath 0112}}%
\def\iota{{\Greekmath 0113}}%
\def\kappa{{\Greekmath 0114}}%
\def\lambda{{\Greekmath 0115}}%
\def\mu{{\Greekmath 0116}}%
\def\nu{{\Greekmath 0117}}%
\def\xi{{\Greekmath 0118}}%
\def\pi{{\Greekmath 0119}}%
\def\rho{{\Greekmath 011A}}%
\def\sigma{{\Greekmath 011B}}%
\def\tau{{\Greekmath 011C}}%
\def\upsilon{{\Greekmath 011D}}%
\def\phi{{\Greekmath 011E}}%
\def\chi{{\Greekmath 011F}}%
\def\psi{{\Greekmath 0120}}%
\def\omega{{\Greekmath 0121}}%
\def\varepsilon{{\Greekmath 0122}}%
\def\vartheta{{\Greekmath 0123}}%
\def\varpi{{\Greekmath 0124}}%
\def\varrho{{\Greekmath 0125}}%
\def\varsigma{{\Greekmath 0126}}%
\def\varphi{{\Greekmath 0127}}%
\def\nabla{{\Greekmath 0272}}
\def\FindBoldGroup{%
   {\setbox0=\hbox{$\mathbf{x\global\edef\theboldgroup{\the\mathgroup}}$}}%
}
\def\Greekmath#1#2#3#4{%
    \if@compatibility
        \ifnum\mathgroup=\symbold
           \mathchoice{\mbox{\boldmath$\displaystyle\mathchar"#1#2#3#4$}}%
                      {\mbox{\boldmath$\textstyle\mathchar"#1#2#3#4$}}%
                      {\mbox{\boldmath$\scriptstyle\mathchar"#1#2#3#4$}}%
                      {\mbox{\boldmath$\scriptscriptstyle\mathchar"#1#2#3#4$}}%
        \else
           \mathchar"#1#2#3#4%
        \fi 
    \else 
        \FindBoldGroup
        \ifnum\mathgroup=\theboldgroup 
           \mathchoice{\mbox{\boldmath$\displaystyle\mathchar"#1#2#3#4$}}%
                      {\mbox{\boldmath$\textstyle\mathchar"#1#2#3#4$}}%
                      {\mbox{\boldmath$\scriptstyle\mathchar"#1#2#3#4$}}%
                      {\mbox{\boldmath$\scriptscriptstyle\mathchar"#1#2#3#4$}}%
        \else
           \mathchar"#1#2#3#4%
        \fi     	    
	  \fi}
\newif\ifGreekBold  \GreekBoldfalse
\let\SAVEPBF=\pbf
\def\pbf{\GreekBoldtrue\SAVEPBF}%
  \newcounter{equationnumber}  
  \def\mathletters{%
     \addtocounter{equation}{1}
     \edef\@currentlabel{\theequation}%
     \setcounter{equationnumber}{\c@equation}
     \setcounter{equation}{0}%
     \edef\theequation{\@currentlabel\noexpand\alph{equation}}%
  }
    \def\BibTeX{{\rm B\kern-.05em{\sc i\kern-.025em b}\kern-.08em
                 T\kern-.1667em\lower.7ex\hbox{E}\kern-.125emX}}}{}%
\def\AmS{{\protect\usefont{OMS}{cmsy}{m}{n}%
                A\kern-.1667em\lower.5ex\hbox{M}\kern-.125emS}}}{}%
\def\@@eqncr{\let\@tempa\relax
    \ifcase\@eqcnt \def\@tempa{& & &}\or \def\@tempa{& &}%
      \else \def\@tempa{&}\fi
     \@tempa
     \if@eqnsw
        \iftag@
           \@taggnum
        \else
           \@eqnnum\stepcounter{equation}%
        \fi
     \fi
     \global\tag@false
     \global\@eqnswtrue
     \global\@eqcnt\z@\cr}
\def\TCItag{\@ifnextchar*{\@TCItagstar}{\@TCItag}}
\def\@TCItag#1{%
    \global\tag@true
    \global\def\@taggnum{(#1)}%
    \global\def\@currentlabel{#1}}
\def\@TCItagstar*#1{%
    \global\tag@true
    \global\def\@taggnum{#1}%
    \global\def\@currentlabel{#1}}
\def\tint{\msi@int\textstyle\int}%
\def\tiint{\msi@int\textstyle\iint}%
\def\tiiint{\msi@int\textstyle\iiint}%
\def\tiiiint{\msi@int\textstyle\iiiint}%
\def\tidotsint{\msi@int\textstyle\idotsint}%
\def\toint{\msi@int\textstyle\oint}%
\newtoks\temptoksa
\newtoks\temptoksb
\newtoks\temptoksc
\def\msi@int#1#2{%
 \def\@temp{{#1#2\the\temptoksc_{\the\temptoksa}^{\the\temptoksb}}}%
 \futurelet\@nextcs
 \@int
}
\def\@int{%
   \ifx\@nextcs\limits
      \typeout{Found limits}%
      \temptoksc={\limits}%
	  \let\@next\@intgobble%
   \else\ifx\@nextcs\nolimits
      \typeout{Found nolimits}%
      \temptoksc={\nolimits}%
	  \let\@next\@intgobble%
   \else
      \typeout{Did not find limits or no limits}%
      \temptoksc={}%
      \let\@next\msi@limits%
   \fi\fi
   \@next   
}%
\def\@intgobble#1{%
   \typeout{arg is #1}%
   \msi@limits
}
\def\msi@limits{%
   \temptoksa={}%
   \temptoksb={}%
   \@ifnextchar_{\@limitsa}{\@limitsb}%
}
\def\@limitsa_#1{%
   \temptoksa={#1}%
   \@ifnextchar^{\@limitsc}{\@temp}%
}
\def\@limitsb{%
   \@ifnextchar^{\@limitsc}{\@temp}%
}
\def\@limitsc^#1{%
   \temptoksb={#1}%
   \@ifnextchar_{\@limitsd}{\@temp}%
}
\def\@limitsd_#1{%
   \temptoksa={#1}%
   \@temp
}
\def\dint{\msi@int\displaystyle\int}%
\def\diint{\msi@int\displaystyle\iint}%
\def\diiint{\msi@int\displaystyle\iiint}%
\def\diiiint{\msi@int\displaystyle\iiiint}%
\def\didotsint{\msi@int\displaystyle\idotsint}%
\def\doint{\msi@int\displaystyle\oint}%
\def\ExitTCILatex{\makeatother }
\if@compatibility\message{amsmath already loaded}\fi\aftergroup\ExitTCILatex}
\if@compatibility\message{amstex already loaded}\fi\aftergroup\ExitTCILatex}
\if@compatibility\message{amsgen already loaded}\fi\aftergroup\ExitTCILatex}
\let\DOTSI\relax
\def\RIfM@{\relax\ifmmode}%
\def\FN@{\futurelet\next}%
\def\iint{\DOTSI\intno@\tw@\FN@\ints@}%
\def\iiint{\DOTSI\intno@\thr@@\FN@\ints@}%
\def\iiiint{\DOTSI\intno@4 \FN@\ints@}%
\def\idotsint{\DOTSI\intno@\z@\FN@\ints@}%
\def\ints@{\findlimits@\ints@@}%
\newif\iflimtoken@
\newif\iflimits@
\def\findlimits@{\limtoken@true\ifx\next\limits\limits@true
 \else\ifx\next\nolimits\limits@false\else
 \limtoken@false\ifx\ilimits@\nolimits\limits@false\else
 \ifinner\limits@false\else\limits@true\fi\fi\fi\fi}%
\def\multint@{\int\ifnum\intno@=\z@\intdots@                          
 \else\intkern@\fi                                                    
 \ifnum\intno@>\tw@\int\intkern@\fi                                   
 \ifnum\intno@>\thr@@\int\intkern@\fi                                 
 \int}
\def\multintlimits@{\intop\ifnum\intno@=\z@\intdots@\else\intkern@\fi
 \ifnum\intno@>\tw@\intop\intkern@\fi
 \ifnum\intno@>\thr@@\intop\intkern@\fi\intop}%
\def\intic@{%
    \mathchoice{\hskip.5em}{\hskip.4em}{\hskip.4em}{\hskip.4em}}%
\def\negintic@{\mathchoice
 {\hskip-.5em}{\hskip-.4em}{\hskip-.4em}{\hskip-.4em}}%
\def\ints@@{\iflimtoken@                                              
 \def\ints@@@{\iflimits@\negintic@
   \mathop{\intic@\multintlimits@}\limits                             
  \else\multint@\nolimits\fi                                          
  \eat@}
 \else                                                                
 \def\ints@@@{\iflimits@\negintic@
  \mathop{\intic@\multintlimits@}\limits\else
  \multint@\nolimits\fi}\fi\ints@@@}%
\def\intkern@{\mathchoice{\!\!\!}{\!\!}{\!\!}{\!\!}}%
\def\plaincdots@{\mathinner{\cdotp\cdotp\cdotp}}%
\def\intdots@{\mathchoice{\plaincdots@}%
 {{\cdotp}\mkern1.5mu{\cdotp}\mkern1.5mu{\cdotp}}%
 {{\cdotp}\mkern1mu{\cdotp}\mkern1mu{\cdotp}}%
 {{\cdotp}\mkern1mu{\cdotp}\mkern1mu{\cdotp}}}%
\def\RIfM@{\relax\protect\ifmmode}
\def\text{\RIfM@\expandafter\text@\else\expandafter\mbox\fi}
\let\nfss@text\text
\def\text@#1{\mathchoice
   {\textdef@\displaystyle\f@size{#1}}%
   {\textdef@\textstyle\tf@size{\firstchoice@false #1}}%
   {\textdef@\textstyle\sf@size{\firstchoice@false #1}}%
   {\textdef@\textstyle \ssf@size{\firstchoice@false #1}}%
   \glb@settings}
\def\textdef@#1#2#3{\hbox{{%
                    \everymath{#1}%
                    \let\f@size#2\selectfont
                    #3}}}
\newif\iffirstchoice@
\def\Let@{\relax\iffalse{\fi\let\\=\cr\iffalse}\fi}%
\def\vspace@{\def\vspace##1{\crcr\noalign{\vskip##1\relax}}}%
\def\multilimits@{\bgroup\vspace@\Let@
 \baselineskip\fontdimen10 \scriptfont\tw@
 \advance\baselineskip\fontdimen12 \scriptfont\tw@
 \lineskip\thr@@\fontdimen8 \scriptfont\thr@@
 \lineskiplimit\lineskip
 \vbox\bgroup\ialign\bgroup\hfil$\m@th\scriptstyle{##}$\hfil\crcr}%
\def\Sb{_\multilimits@}%
\def\endSb{\crcr\egroup\egroup\egroup}%
\def\Sp{^\multilimits@}%
\newdimen\ex@
\def\rightarrowfill@#1{$#1\m@th\mathord-\mkern-6mu\cleaders
 \hbox{$#1\mkern-2mu\mathord-\mkern-2mu$}\hfill
 \mkern-6mu\mathord\rightarrow$}%
\def\leftarrowfill@#1{$#1\m@th\mathord\leftarrow\mkern-6mu\cleaders
 \hbox{$#1\mkern-2mu\mathord-\mkern-2mu$}\hfill\mkern-6mu\mathord-$}%
\def\leftrightarrowfill@#1{$#1\m@th\mathord\leftarrow
\mkern-6mu\cleaders
 \hbox{$#1\mkern-2mu\mathord-\mkern-2mu$}\hfill
 \mkern-6mu\mathord\rightarrow$}%
\def\overrightarrow{\mathpalette\overrightarrow@}%
\def\overrightarrow@#1#2{\vbox{\ialign{##\crcr\rightarrowfill@#1\crcr
 \noalign{\kern-\ex@\nointerlineskip}$\m@th\hfil#1#2\hfil$\crcr}}}%
\def\overleftarrow{\mathpalette\overleftarrow@}%
\def\overleftarrow@#1#2{\vbox{\ialign{##\crcr\leftarrowfill@#1\crcr
 \noalign{\kern-\ex@\nointerlineskip}$\m@th\hfil#1#2\hfil$\crcr}}}%
\def\overleftrightarrow{\mathpalette\overleftrightarrow@}%
\def\overleftrightarrow@#1#2{\vbox{\ialign{##\crcr
   \leftrightarrowfill@#1\crcr
 \noalign{\kern-\ex@\nointerlineskip}$\m@th\hfil#1#2\hfil$\crcr}}}%
\def\underrightarrow{\mathpalette\underrightarrow@}%
\def\underrightarrow@#1#2{\vtop{\ialign{##\crcr$\m@th\hfil#1#2\hfil
  $\crcr\noalign{\nointerlineskip}\rightarrowfill@#1\crcr}}}%
\def\underleftarrow{\mathpalette\underleftarrow@}%
\def\underleftarrow@#1#2{\vtop{\ialign{##\crcr$\m@th\hfil#1#2\hfil
  $\crcr\noalign{\nointerlineskip}\leftarrowfill@#1\crcr}}}%
\def\underleftrightarrow{\mathpalette\underleftrightarrow@}%
\def\underleftrightarrow@#1#2{\vtop{\ialign{##\crcr$\m@th
  \hfil#1#2\hfil$\crcr
 \noalign{\nointerlineskip}\leftrightarrowfill@#1\crcr}}}%
\def\qopnamewl@#1{\mathop{\operator@font#1}\nlimits@}
\let\nlimits@\displaylimits
\def\setboxz@h{\setbox\z@\hbox}
\def\varlim@#1#2{\mathop{\vtop{\ialign{##\crcr
 \hfil$#1\m@th\operator@font lim$\hfil\crcr
 \noalign{\nointerlineskip}#2#1\crcr
 \noalign{\nointerlineskip\kern-\ex@}\crcr}}}}
 \def\rightarrowfill@#1{\m@th\setboxz@h{$#1-$}\ht\z@\z@
  $#1\copy\z@\mkern-6mu\cleaders
  \hbox{$#1\mkern-2mu\box\z@\mkern-2mu$}\hfill
  \mkern-6mu\mathord\rightarrow$}
\def\leftarrowfill@#1{\m@th\setboxz@h{$#1-$}\ht\z@\z@
  $#1\mathord\leftarrow\mkern-6mu\cleaders
  \hbox{$#1\mkern-2mu\copy\z@\mkern-2mu$}\hfill
  \mkern-6mu\box\z@$}
\def\projlim{\qopnamewl@{proj\,lim}}
\def\injlim{\qopnamewl@{inj\,lim}}
\def\varinjlim{\mathpalette\varlim@\rightarrowfill@}
\def\varprojlim{\mathpalette\varlim@\leftarrowfill@}
\def\varliminf{\mathpalette\varliminf@{}}
\def\varliminf@#1{\mathop{\underline{\vrule\@depth.2\ex@\@width\z@
   \hbox{$#1\m@th\operator@font lim$}}}}
\def\varlimsup{\mathpalette\varlimsup@{}}
\def\varlimsup@#1{\mathop{\overline
  {\hbox{$#1\m@th\operator@font lim$}}}}
\def\align{\@verbatim \frenchspacing\@vobeyspaces \@alignverbatim
You are using the "align" environment in a style in which it is not defined.}
\let\csname endalign*\endcsname =\endtrivlist
\def\alignat{\@verbatim \frenchspacing\@vobeyspaces \@alignatverbatim
You are using the "alignat" environment in a style in which it is not defined.}
\let\csname endalignat*\endcsname =\endtrivlist
\def\xalignat{\@verbatim \frenchspacing\@vobeyspaces \@xalignatverbatim
You are using the "xalignat" environment in a style in which it is not defined.}
\let\csname endxalignat*\endcsname =\endtrivlist
\def\gather{\@verbatim \frenchspacing\@vobeyspaces \@gatherverbatim
You are using the "gather" environment in a style in which it is not defined.}
\let\csname endgather*\endcsname =\endtrivlist
\def\multiline{\@verbatim \frenchspacing\@vobeyspaces \@multilineverbatim
You are using the "multiline" environment in a style in which it is not defined.}
\let\csname endmultiline*\endcsname =\endtrivlist
\def\arrax{\@verbatim \frenchspacing\@vobeyspaces \@arraxverbatim
You are using a type of "array" construct that is only allowed in AmS-LaTeX.}
\def\tabulax{\@verbatim \frenchspacing\@vobeyspaces \@tabulaxverbatim
You are using a type of "tabular" construct that is only allowed in AmS-LaTeX.}
\let\csname endarrax*\endcsname =\endtrivlist
\let\csname endtabulax*\endcsname =\endtrivlist
 \def\endequation{%
     \ifmmode\ifinner 
      \iftag@
        \addtocounter{equation}{-1} 
        $\hfil
           \displaywidth\linewidth\@taggnum\egroup \endtrivlist
        \global\tag@false
        \global\@ignoretrue   
      \else
        $\hfil
           \displaywidth\linewidth\@eqnnum\egroup \endtrivlist
        \global\tag@false
        \global\@ignoretrue 
      \fi
     \else   
      \iftag@
        \addtocounter{equation}{-1} 
        \eqno \hbox{\@taggnum}
        \global\tag@false%
        $$\global\@ignoretrue
      \else
        \eqno \hbox{\@eqnnum}
        $$\global\@ignoretrue
      \fi
     \fi\fi
 } 
 \newif\iftag@ \tag@false
 \def\TCItag{\@ifnextchar*{\@TCItagstar}{\@TCItag}}
 \def\@TCItag#1{%
     \global\tag@true
     \global\def\@taggnum{(#1)}%
     \global\def\@currentlabel{#1}}
 \def\@TCItagstar*#1{%
     \global\tag@true
     \global\def\@taggnum{#1}%
     \global\def\@currentlabel{#1}}
     \def\tag{\@ifnextchar*{\@tagstar}{\@tag}}
     \def\@tag#1{%
         \global\tag@true
         \global\def\@taggnum{(#1)}}
     \def\@tagstar*#1{%
         \global\tag@true
         \global\def\@taggnum{#1}}
\begin{document}

\title{Forced oscillations of a body attached to a viscoelastic rod of
fractional derivative type}
\author{Teodor M. Atanackovic%
\begin{footnote}
{Department of Mechanics, Faculty of Technical Sciences,
University of Novi Sad, Trg D. Obradovica, 6, 21000 Novi Sad,
Serbia, atanackovic@uns.ac.rs}
\end{footnote}, Stevan Pilipovic%
\begin{footnote}
{Department of Mathematics, Faculty of Natural Sciences and
Mathematics, University of Novi Sad, Trg D. Obradovica, 4, 21000
Novi Sad, Serbia, stevan.pilipovic@dmi.uns.ac.rs}
\end{footnote} and Dusan Zorica%
\begin{footnote}
{Mathematical Institute, Serbian Academy of Sciences and Arts,
Kneza Mihaila 36, 11000 Beograd, Serbia, dusan\textunderscore
zorica@mi.sanu.ac.rs}
\end{footnote}}
\maketitle

\begin{abstract}
\noindent We study forced oscillations of a rod with a body attached to its
free end so that the motion of a system is described by two sets of
equations, one of integer and the other of the fractional order. To the
constitutive equation we associate a single function of complex variable
that plays a key role in finding the solution of the system and in
determining its properties. This function could be defined for a linear
viscoelastic bodies of integer/fractional derivative type.

\bigskip

\noindent \textbf{Keywords:} fractional derivative, distributed-order
fractional derivative, fractional viscoelastic material, forced oscillations
of a rod, forced oscillations of a body
\end{abstract}

\section{Introduction}

In this paper we continue our recent work on the dynamics of viscoelastic
rods described through the fractional derivative type equations, presented
in \cite{a,AKOZ,APZ-3,APZ-4,APZ-5}. A problem that we shall be dealing with
in the present paper is forced oscillations of a body attached to a
viscoelastic rod with comparable masses. A rod-body system is shown in
Figure \ref{fig-00}.
\begin{figure}[h]
\centering
\includegraphics[scale=1]{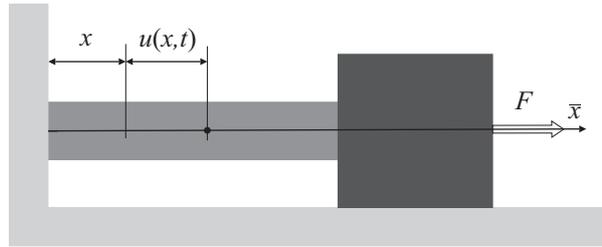}
\caption{System rod-body.}
\label{fig-00}
\end{figure}

Our new approach in the investigation of the dynamics of linear viscoelastic
rods of fractional type, proposed in \cite{APZ-5}, is based on the
properties of a specially defined function of complex variable $M$ (see (\ref%
{M})). Function $M$ is associated with the Laplace transform of the
constitutive equation for the material of a rod. It is defined similarly as
complex moduli (a quantity obtained after application of Fourier transform
to constitutive equation). By considering two cases of constitutive
equations, we show in this paper (that is a continuation of \cite{APZ-5})
that the proposed approach can be adventitiously used in dynamics of
viscoelastic rods.

We find an explicit form of the solution as a convolution of a forcing
function and a solution kernel. Moreover, we present numerical examples,
corresponding to two common cases of constitutive equations. We analyze
forced oscillations of a system consisting of a viscoelastic rod of
fractional-order type and a body attached to its end. Thus, the cases of the
dynamics of an elastic rod and of a light rod (mass of a rod is negligible)
are special cases of our analysis. We refer to \cite{novacki} for the
analysis of oscillations of an elastic rod with the mass attached to its
end.

In \cite{a}, we analyzed forced oscillations of a body attached to a
viscoelastic rod, described by a fractional distributed-order model. We
assumed that the mass of a rod is negligible compared to the mass of a body.
Similarly as in \cite{a}, we analyzed in \cite{APZ-3,APZ-4} the wave
propagation in a viscoelastic solid-like rod of finite length with one of
its ends fixed to a rigid wall. We considered two cases (i.e. two types of
boundary conditions): the case when there is a prescribed displacement and
the case when there is prescribed stress on rod's free end. Similar problem
of wave propagation was analyzed in \cite{AKOZ} for a rod made of
viscoelastic fluid-like material. The present paper is closely related to
\cite{APZ-5} where we analyze a more general form of a constitutive
equation. Actually, in \cite{APZ-5} we gave a theoretical background which
we use here and analyze two models which will be described below.

Let $m$ be the mass of a body attached to a rod. The length of the rod in
undeformed state is $L$ and its axis, at the initial time moment as well as
during the motion, coincides with the $\bar{x}$ axis, see Figure \ref{fig-00}%
. Let $x$ denote a position of a material point of the rod at the initial
time $t_{0}=0.$ The position of this point at the time $t>0$ is $x+u\left(
x,t\right) .$ The equations of motion of the rod-body system are
\begin{gather}
\frac{\partial }{\partial x}\sigma \left( x,t\right) =\rho \frac{\partial
^{2}}{\partial t^{2}}u\left( x,t\right) ,\;\;\;\;\varepsilon \left(
x,t\right) =\frac{\partial }{\partial x}u\left( x,t\right) ,\;\;x\in \left[
0,L\right] ,\;t>0,  \label{EM-SM-dim} \\
\int_{0}^{1}\phi _{\sigma }\left( \gamma \right) \,{}_{0}\mathrm{D}%
_{t}^{\gamma }\sigma \left( x,t\right) \mathrm{d}\gamma =E\int_{0}^{1}\phi
_{\varepsilon }\left( \gamma \right) \,{}_{0}\mathrm{D}_{t}^{\gamma
}\varepsilon \left( x,t\right) \mathrm{d}\gamma ,\;\;x\in \left[ 0,L\right]
,\;t>0,  \label{CE-dim} \\
u\left( x,0\right) =0,\;\;\;\frac{\partial }{\partial t}u\left( x,0\right)
=0,\;\;\;\sigma \left( x,0\right) =0,\;\;\;\varepsilon \left( x,0\right)
=0,\;\;x\in \left[ 0,L\right] ,  \label{IC-dim} \\
u\left( 0,t\right) =0,\;\;\;\;-A\sigma \left( L,t\right) +F\left( t\right) =m%
\frac{\partial ^{2}}{\partial t^{2}}u\left( L,t\right) ,\;\;t>0.
\label{BC-dim}
\end{gather}%
We use symbols $\sigma ,$ $u$ and $\varepsilon ,$ in the equation of motion (%
\ref{EM-SM-dim})$_{1}$ and in the strain (\ref{EM-SM-dim})$_{2},$ to denote
stress, displacement and strain, respectively, depending on the initial
position $x$ at time $t,$ while $\rho $ denotes the density of a material.
Constitutive equation (\ref{CE-dim}) corresponds to the distributed-order
fractional derivative model of a viscoelastic body, $E$ is a generalized
Young modulus (a positive constant having the dimension of a stress), $\phi
_{\sigma }$ and $\phi _{\varepsilon }$ are given constitutive functions or
distributions, ${}_{0}\mathrm{D}_{t}^{\gamma }$ is the left
Riemann-Liouville fractional derivative operator of order $\gamma \in \left(
0,1\right) $ (see \cite{SKM})%
\begin{equation*}
_{0}\mathrm{D}_{t}^{\gamma }y\left( t\right) :=\frac{\mathrm{d}}{\mathrm{d}t}%
\left( \frac{t^{-\gamma }}{\Gamma \left( 1-\gamma \right) }\ast y\left(
t\right) \right) ,\;\;t>0,
\end{equation*}%
where $\Gamma $ is the Euler gamma function and $\ast $ is the convolution.
Recall, if $f,g\in L_{loc}^{1}\left( \mathbf{%
\mathbb{R}
}\right) ,$ $\limfunc{supp}f,g\subset \left[ 0,\infty \right) ,$ then $%
\left( f\ast g\right) \left( t\right) :=\int_{0}^{t}f\left( \tau \right)
g\left( t-\tau \right) \mathrm{d}\tau ,$ $t\in
\mathbb{R}
$. We refer to \cite{TAFDE,Pod,SKM} for the basic definitions and assertions
of fractional calculus. Initial conditions in (\ref{IC-dim}) specify that
the rod-body system is unstressed and in the state of rest at the initial
time instant. Boundary condition (\ref{BC-dim})$_{1}$ means that one end of
the rod is fixed. The other boundary condition (\ref{BC-dim})$_{2}$ is the
equation of translatory motion along the $\bar{x}$ axis of the body attached
to the free end of the rod. In (\ref{BC-dim}), $A$ stands for the
cross-section area of the rod and $F$ stands for the known external force
acting on the body.

Regarding the constitutive equation (\ref{CE-dim}) we consider the following
two cases.

\begin{enumerate}
\item[I] Fractional Zener model of a viscoelastic body%
\begin{equation}
\left( 1+a\,{}_{0}\mathrm{D}_{t}^{\alpha }\right) \sigma \left( x,t\right)
=E\left( 1+b\,{}_{0}\mathrm{D}_{t}^{\alpha }\right) \varepsilon \left(
x,t\right) .  \label{Zener}
\end{equation}%
It is obtained from (\ref{CE-dim}) by choosing%
\begin{equation}
\phi _{\sigma }\left( \gamma \right) :=\delta \left( \gamma \right)
+a\,\delta \left( \gamma -\alpha \right) ,\;\;\phi _{\varepsilon }\left(
\gamma \right) :=\delta \left( \gamma \right) +b\,\delta \left( \gamma
-\alpha \right) ,\;\;\alpha \in \left( 0,1\right) ,\;0<a\leq b,
\label{zener}
\end{equation}%
where $\delta $ denotes the Dirac delta distribution.

\item[II] Distributed-order model of a viscoelastic body%
\begin{equation}
\int_{0}^{1}a^{\gamma }\,{}_{0}\mathrm{D}_{t}^{\gamma }\sigma \left(
x,t\right) \mathrm{d}\gamma =E\int_{0}^{1}b^{\gamma }\,{}_{0}\mathrm{D}%
_{t}^{\gamma }\varepsilon \left( x,t\right) \mathrm{d}\gamma .  \label{A-B}
\end{equation}%
It is obtained from (\ref{CE-dim}) by choosing%
\begin{equation}
\phi _{\sigma }\left( \gamma \right) :=a^{\gamma },\;\;\phi _{\varepsilon
}\left( \gamma \right) :=b^{\gamma },\;\;\gamma \in \left( 0,1\right)
,\;0<a\leq b.  \label{a-b}
\end{equation}%
We note that (\ref{a-b}) is the simplest form of $\phi _{\sigma }$ and $\phi
_{\varepsilon }$ providing dimensional homogeneity.
\end{enumerate}

Equation (\ref{Zener}) is often used in modeling viscoelastic bodies. It is
used in \cite{KOZ10} for the study of the wave propagation in an unbounded
domain. Equation (\ref{A-B}) is used in \cite{a-2002-a,H-L} as well as in
\cite{APZ-3,APZ-4}, where the wave motion, stress relaxation and creep, are
studied on a bounded domain. We also mention that the wave motion in a body,
described by a more general model than (\ref{Zener}) is studied in \cite%
{KOZ11}. We refer to \cite{Mai-10,R-2010,R-S-2010} for the detailed account
of applications of fractional calculus in viscoelasticity. Problems similar
to (\ref{EM-SM-dim}) - (\ref{BC-dim}) were also treated in \cite{R-S1,R-S}
with the constitutive equations related to the distributed-order model (\ref%
{CE-dim}) in the special cases.

We treated in \cite{APZ-4} the creep test of a material described by the
constitutive equation (\ref{A-B}) and concluded (according to numerical
examples) that the material is solid-like, while in \cite{AKOZ} the
constitutive function was given by%
\begin{equation}
\left( 1+a\,{}_{0}\mathrm{D}_{t}^{\alpha }\right) \sigma \left( x,t\right)
=E\left( b_{0}\,{}_{0}\mathrm{D}_{t}^{\beta _{0}}+b_{1}\,{}_{0}\mathrm{D}%
_{t}^{\beta _{1}}+b_{2}\,{}_{0}\mathrm{D}_{t}^{\beta _{2}}\right)
\varepsilon \left( x,t\right) ,  \label{Hilf}
\end{equation}%
where $a,$ $b_{0},$ $b_{1},$ $b_{2}$ are positive constants, $0<\alpha
<\beta _{0}<\beta _{1}<\beta _{2}\leq 1,$ and the conclusion was (again
based upon numerical examples) that the material is fluid-like. The
constitutive equation (\ref{Hilf}) is proposed in \cite{hilf}. In this work
we treat numerically the creep test for solid-like materials described by (%
\ref{Zener}).

\begin{remark}
Constitutive functions (or distributions) $\phi _{\sigma }$ and $\phi
_{\varepsilon }$ in (\ref{CE-dim}) have to satisfy the restrictions
following from the Second Law of Thermodynamics, see \cite%
{a-2002,a-2003,AKOZ}. We refer to \cite{AKOZ} for a systematic review of
restrictions if $\phi _{\sigma }$ and $\phi _{\varepsilon }$ are given in
the form of sums of the Dirac $\delta $ distributions
\begin{equation*}
\phi _{\sigma }\left( \gamma \right) :=\sum_{n=0}^{N}a_{n}\,\delta \left(
\gamma -\alpha _{n}\right) ,\;\;\phi _{\varepsilon }\left( \gamma \right)
:=\sum_{m=0}^{M}b_{m}\,\delta \left( \gamma -\beta _{m}\right) ,\;\;\alpha
_{n},\beta _{m}\in \left[ 0,1\right] .
\end{equation*}
\end{remark}

\begin{remark}
\label{s-f}Differences between solid and fluid-like materials are observed
in the creep test (i.e. when a material is subjected to a sudden, but later
constant force on its free end). Namely, solid-like materials creep to a
finite displacement, while the fluid-like materials creep to an infinite
displacement.
\end{remark}

The paper is organized as follows. In \S \thinspace \ref{cfs} we write the
system (\ref{EM-SM-dim}) - (\ref{BC-dim}) in the dimensionless form and
obtain (\ref{sys-1}) - (\ref{BC}). Then we formally apply the Laplace
transform to (\ref{sys-1}) - (\ref{BC}), define the function $M$ and obtain
the solutions to (\ref{sys-1}) - (\ref{BC}) in the Laplace domain via the
forcing term and solution kernel. Section \ref{P} is devoted to the
verification that the function $M$ in the cases of the fractional Zener (\ref%
{Zener}) and distributed-order model (\ref{A-B}) satisfies assumptions,
cited from \cite{APZ-5}, that imply the existence and uniqueness of the
solutions to (\ref{sys-1}) - (\ref{BC}). Then, we write theorems on
existence and uniqueness of solutions, that are proven in \cite{APZ-5}. The
explicit form of the solution $u,$ given in Theorem \ref{thmP}, is used in
\S \thinspace \ref{ne} in order to plot the solution. The plots are given
and discussed for the fractional Zener model and for two different forcing
functions.

\section{Formal solutions\label{cfs}}

We start from the system (\ref{EM-SM-dim}) - (\ref{BC-dim}) and write it in
the dimensionless form. Then, by the Laplace transform method, we obtain the
displacement $u$ and the stress $\sigma $ as the convolution of the external
force $F$ and solution kernels $P$ and $Q,$ respectively. Determination of $%
P $ and $Q$ will be given in \S \thinspace \ref{P}.

The system (\ref{EM-SM-dim}) - (\ref{BC-dim}) transforms into%
\begin{gather}
\frac{\partial }{\partial x}\sigma \left( x,t\right) =\kappa ^{2}\frac{%
\partial ^{2}}{\partial t^{2}}u\left( x,t\right) ,\;\;\;\;\varepsilon \left(
x,t\right) =\frac{\partial }{\partial x}u\left( x,t\right) ,\;\;x\in \left[
0,1\right] ,\;t>0,  \label{sys-1} \\
\int_{0}^{1}\phi _{\sigma }\left( \gamma \right) {}_{0}\mathrm{D}%
_{t}^{\gamma }\sigma \left( x,t\right) \mathrm{d}\gamma =\int_{0}^{1}\phi
_{\varepsilon }\left( \gamma \right) {}_{0}\mathrm{D}_{t}^{\gamma
}\varepsilon \left( x,t\right) \mathrm{d}\gamma ,\;\;x\in \left[ 0,1\right]
,\;t>0,  \label{sys-2} \\
u\left( x,0\right) =0,\;\;\;\frac{\partial }{\partial t}u\left( x,0\right)
=0,\;\;\;\sigma \left( x,0\right) =0,\;\;\;\varepsilon \left( x,0\right)
=0,\;\;x\in \left[ 0,1\right] ,  \label{IC} \\
u\left( 0,t\right) =0,\;\;\;\;-\sigma \left( 1,t\right) +F\left( t\right) =%
\frac{\partial ^{2}}{\partial t^{2}}u\left( 1,t\right) ,\;\;t>0.  \label{BC}
\end{gather}%
This is done by introducing the square root of the ratio between the masses
of a rod and a body%
\begin{equation*}
\kappa =\sqrt{\frac{\rho AL}{m}}
\end{equation*}%
and dimensionless quantities%
\begin{equation*}
\bar{x}=\frac{x}{L},\;\bar{t}=\frac{t}{\sqrt{\frac{mL}{AE}}},\;\bar{u}=\frac{%
u}{L},\;\bar{\sigma}=\frac{\sigma }{E},\;\bar{\phi}_{\sigma }=\frac{\phi
_{\sigma }}{\left( \sqrt{\frac{mL}{AE}}\right) ^{\gamma }},\;\bar{\phi}%
_{\varepsilon }=\frac{\phi _{\varepsilon }}{\left( \sqrt{\frac{mL}{AE}}%
\right) ^{\gamma }},\;\bar{F}=\frac{F}{AE}.
\end{equation*}%
In writing (\ref{sys-1}) - (\ref{BC}) we omitted bar over dimensionless
quantities. Note that the choice of dimensionless quantities implies that
the case of a rod without the attached mass ($m=0$) cannot be studied as a
special case of equations (\ref{sys-1}) - (\ref{BC}).

In order to solve the system (\ref{sys-1}), (\ref{sys-2}) subjected to the
initial (\ref{IC}) and boundary data (\ref{BC}), we use the Laplace
transform method. Recall that the Laplace transform of $f\in
L_{loc}^{1}\left( \mathbf{%
\mathbb{R}
}\right) ,$ $f\equiv 0$ in $\left( -\infty ,0\right] $ and $\left\vert
f\left( t\right) \right\vert \leq c\mathrm{e}^{kt},$ $t>0,$ for some $k>0,$
is defined by%
\begin{equation*}
\tilde{f}\left( s\right) =\mathcal{L}\left[ f\left( t\right) \right] \left(
s\right) :=\int_{0}^{\infty }f\left( t\right) e^{-st}\mathrm{d}t,\;\;\func{Re%
}s>k
\end{equation*}%
and analytically continued into the appropriate domain $D.$ Moreover, we
consider our problems within the the space of tempered generalized functions
supported by $\left[ 0,\infty \right) ,$ denoted by $\mathcal{S}_{+}^{\prime
}.$ The Laplace transform within this space is an extension of the classical
one, given above. Namely, any $g\in \mathcal{S}_{+}^{\prime }$ of the form $%
g:={}_{0}\mathrm{D}_{t}^{\gamma }f,$ $\gamma \in \left( 0,1\right) ,$ where $%
f$ is as above (polynomially bounded) satisfies $\mathcal{L}\left[ g\left(
t\right) \right] \left( s\right) =s^{\gamma }\tilde{f}\left( s\right) ,$ $%
\func{Re}s>0.$ We refer to \cite{vlad} for the properties of elements of $%
\mathcal{S}_{+}^{\prime }$ and their Laplace transforms.

Applying formally the Laplace transform to (\ref{sys-1}) - (\ref{BC}) we
obtain%
\begin{gather}
\frac{\partial }{\partial x}\tilde{\sigma}\left( x,s\right) =\kappa ^{2}s^{2}%
\tilde{u}\left( x,s\right) ,\;\;\;\;\tilde{\varepsilon}\left( x,s\right) =%
\frac{\partial }{\partial x}\tilde{u}\left( x,s\right) ,\;\;x\in \left[ 0,1%
\right] ,\;s\in D,  \label{S-LT-1} \\
\tilde{\sigma}\left( x,s\right) \int_{0}^{1}\phi _{\sigma }\left( \gamma
\right) s^{\gamma }\mathrm{d}\gamma =\tilde{\varepsilon}\left( x,s\right)
\int_{0}^{1}\phi _{\varepsilon }\left( \gamma \right) s^{\gamma }\mathrm{d}%
\gamma ,\;\;x\in \left[ 0,1\right] ,\;s\in D,  \label{S-LT-2} \\
\tilde{u}\left( 0,s\right) =0,\;\;\tilde{\sigma}\left( 1,s\right) +s^{2}%
\tilde{u}\left( 1,s\right) =\tilde{F}\left( s\right) ,\;\;s\in D.
\label{S-LT-3}
\end{gather}%
By (\ref{S-LT-2}) we have
\begin{equation}
\tilde{\sigma}\left( x,s\right) =\frac{1}{M^{2}\left( s\right) }\tilde{%
\varepsilon}\left( x,s\right) ,\;\;s\in D,  \label{sigma-tilda}
\end{equation}%
where we introduced%
\begin{equation}
M\left( s\right) :=\sqrt{\frac{\int_{0}^{1}\phi _{\sigma }\left( \gamma
\right) s^{\gamma }\mathrm{d}\gamma }{\int_{0}^{1}\phi _{\varepsilon }\left(
\gamma \right) s^{\gamma }\mathrm{d}\gamma }},\;\;s\in D.  \label{M}
\end{equation}%
Thus, using (\ref{zener}) and (\ref{a-b}), in the cases of constitutive
equations (\ref{Zener}) and (\ref{A-B}) we obtain%
\begin{eqnarray}
M\left( s\right)  &=&\sqrt{\frac{1+as^{\alpha }}{1+bs^{\alpha }}},\;\;s\in
\mathbb{C}
\backslash \left( -\infty ,0\right] ,\;0<a\leq b,\;\alpha \in \left(
0,1\right) ,  \label{zener-M} \\
M\left( s\right)  &=&\sqrt{\frac{\ln \left( bs\right) }{\ln \left( as\right)
}\frac{as-1}{bs-1}},\;\;s\in
\mathbb{C}
\backslash \left( -\infty ,0\right] ,\;0<a\leq b.  \label{a-b-M}
\end{eqnarray}%
So, in the sequel $D=%
\mathbb{C}
\backslash \left( -\infty ,0\right] .$

In order to obtain the displacement $u,$ we use (\ref{S-LT-1}), (\ref%
{sigma-tilda}) and obtain%
\begin{equation}
\frac{\partial ^{2}}{\partial x^{2}}\tilde{u}\left( x,s\right) -\kappa
^{2}s^{2}M^{2}\left( s\right) \tilde{u}\left( x,s\right) =0,\;\;x\in \left[
0,1\right] ,\;s\in
\mathbb{C}
\backslash \left( -\infty ,0\right] ,  \label{eq}
\end{equation}%
The solution of (\ref{eq}) is%
\begin{equation*}
\tilde{u}\left( x,s\right) =C_{1}\left( s\right) \mathrm{e}^{\kappa
xsM\left( s\right) }+C_{2}\left( s\right) \mathrm{e}^{-\kappa xsM\left(
s\right) },\;\;x\in \left[ 0,1\right] ,\;s\in
\mathbb{C}
\backslash \left( -\infty ,0\right] ;
\end{equation*}%
$C_{1}$ and $C_{2}$ are arbitrary functions which are determined from (\ref%
{S-LT-3})$_{1}$ as $2C=C_{1}=-C_{2}.$ Therefore,%
\begin{equation}
\tilde{u}\left( x,s\right) =C\left( s\right) \sinh \left( \kappa xsM\left(
s\right) \right) ,\;\;x\in \left[ 0,1\right] ,\;s\in
\mathbb{C}
\backslash \left( -\infty ,0\right] .  \label{u-tilda-1}
\end{equation}%
By (\ref{S-LT-1})$_{2}$, (\ref{sigma-tilda}) and (\ref{u-tilda-1}) we have
\begin{equation}
\tilde{\sigma}\left( x,s\right) =C\left( s\right) \frac{\kappa s}{M\left(
s\right) }\cosh \left( \kappa xsM\left( s\right) \right) ,\;\;x\in \left[ 0,1%
\right] ,\;s\in
\mathbb{C}
\backslash \left( -\infty ,0\right] .  \label{sigma-tilda-1}
\end{equation}%
Using (\ref{u-tilda-1}) and (\ref{sigma-tilda-1}) at $x=1,$ as well as (\ref%
{S-LT-3})$_{2}$ we obtain%
\begin{equation*}
C\left( s\right) =\frac{M\left( s\right) \tilde{F}\left( s\right) }{s\left(
sM\left( s\right) \sinh \left( \kappa sM\left( s\right) \right) +\kappa
\cosh \left( \kappa sM\left( s\right) \right) \right) },\;\;s\in
\mathbb{C}
\backslash \left( -\infty ,0\right] .
\end{equation*}%
Therefore, the Laplace transforms of the displacement and stress from (\ref%
{u-tilda-1}) and (\ref{sigma-tilda-1}) are
\begin{equation}
\tilde{u}\left( x,s\right) =\tilde{F}\left( s\right) \tilde{P}\left(
x,s\right) ,\;\;\tilde{\sigma}\left( x,s\right) =\tilde{F}\left( s\right)
\tilde{Q}\left( x,s\right) ,\;\;x\in \left[ 0,1\right] ,\;s\in
\mathbb{C}
\backslash \left( -\infty ,0\right] .  \label{u,sigma-tilda}
\end{equation}%
where%
\begin{eqnarray}
\tilde{P}\left( x,s\right) &=&\frac{1}{s}\frac{M\left( s\right) \sinh \left(
\kappa xsM\left( s\right) \right) }{sM\left( s\right) \sinh \left( \kappa
sM\left( s\right) \right) +\kappa \cosh \left( \kappa sM\left( s\right)
\right) },\;\;x\in \left[ 0,1\right] ,\;s\in
\mathbb{C}
\backslash \left( -\infty ,0\right] ,  \label{P-tilda} \\
\tilde{Q}\left( x,s\right) &=&\frac{\kappa \cosh \left( \kappa xsM\left(
s\right) \right) }{sM\left( s\right) \sinh \left( \kappa sM\left( s\right)
\right) +\kappa \cosh \left( \kappa sM\left( s\right) \right) },\;\;x\in %
\left[ 0,1\right] ,\;s\in
\mathbb{C}
\backslash \left( -\infty ,0\right] .  \label{Q-tilda}
\end{eqnarray}%
Applying the inverse Laplace transform to (\ref{u,sigma-tilda}) we obtain
the displacement and stress as%
\begin{equation}
u\left( x,t\right) =F\left( t\right) \ast P\left( x,t\right) ,\;\;\sigma
\left( x,t\right) =F\left( t\right) \ast Q\left( x,t\right) ,\;\;x\in \left[
0,1\right] ,\;t>0.  \label{u, sigma}
\end{equation}%
The validity of these formal expressions will be proved in the sequel.

\section{Explicit form of solutions\label{P}}

In order to obtain the displacement $u$ and stress $\sigma $ by (\ref{u,
sigma}), we have to obtain functions $P$ and $Q,$ i.e., to invert the
Laplace transform in (\ref{P-tilda}) and (\ref{Q-tilda}). First, we examine
the behavior of the function $M,$ given by (\ref{M}), in the limiting cases
as $\left\vert s\right\vert \rightarrow \infty $ and $\left\vert
s\right\vert \rightarrow 0$ (by this we mean only those $s$ that belong to $%
\mathbb{C}
\backslash \left( -\infty ,0\right] $) in the special cases when $M$ takes
any of the forms given by (\ref{zener-M}) and (\ref{a-b-M}).

If $M$ is given by (\ref{zener-M}) or (\ref{a-b-M}) we have%
\begin{equation}
\left\vert M\left( s\right) \right\vert \approx \sqrt{\frac{a}{b}},\;\;\text{%
as}\;\;\left\vert s\right\vert \rightarrow \infty ,\;\;\text{and}%
\;\;\left\vert M\left( s\right) \right\vert \approx 1,\;\;\text{as}%
\;\;\left\vert s\right\vert \rightarrow 0.  \label{s-l}
\end{equation}

We use results from \cite{APZ-5} in order to obtain the displacement $u$ and
the stress $\sigma .$ In order to do so, we recall assumptions on $M$ that
have to be satisfied.

We shall analyze a function of complex variable%
\begin{equation}
f\left( s\right) :=sM\left( s\right) \sinh \left( \kappa sM\left( s\right)
\right) +\kappa \cosh \left( \kappa sM\left( s\right) \right) ,\;\;s\in
\mathbb{C}
.  \label{polovi-01}
\end{equation}

Let $M$ be of the form $M\left( s\right) =r\left( s\right) +\mathrm{i}%
h\left( s\right) ,$ as $\left\vert s\right\vert \rightarrow \infty .$ Then

\begin{enumerate}
\item[(A1)]
\begin{eqnarray*}
&&\lim_{\left\vert s\right\vert \rightarrow \infty }r\left( s\right)
=c_{\infty }>0,\;\;\lim_{\left\vert s\right\vert \rightarrow \infty }h\left(
s\right) =0,\;\;\lim_{\left\vert s\right\vert \rightarrow 0}M\left( s\right)
=c_{0}, \\
&&\text{for some constants}\;c_{\infty },c_{0}>0.
\end{eqnarray*}
\end{enumerate}

Let $s_{n}=\xi _{n}+\mathrm{i}\zeta _{n},$ $n\in
\mathbb{N}
,$ satisfy the equation%
\begin{equation}
f\left( s\right) =0,\;\;s\in V,  \label{polovi-0}
\end{equation}%
where $f$ is given by (\ref{polovi-01}). Then:

\begin{enumerate}
\item[(A2)] There exists $n_{0}>0,$ such that for $n>n_{0}$%
\begin{eqnarray*}
&&\func{Im}s_{n}\in
\mathbb{R}
_{+}\Rightarrow h\left( s_{n}\right) \leq 0,\;\;\;\;\func{Im}s_{n}\in
\mathbb{R}
_{-}\Rightarrow h\left( s_{n}\right) \geq 0, \\
&&\text{where }h:=\func{Im}M.
\end{eqnarray*}

\item[(A3)] There exist $s_{0}>0$ and $c>0$ such that
\begin{equation*}
\left\vert \frac{\mathrm{d}}{\mathrm{d}s}(sM\left( s\right) )\right\vert
\geq c,\;\;\left\vert s\right\vert >s_{0}.
\end{equation*}

\item[(A4)] For every $\gamma >0$ there exists $\theta >0$ and $s_{0}$ such
that%
\begin{equation*}
\left\vert \left( s+\Delta s\right) M\left( s+\Delta s\right) -sM\left(
s\right) \right\vert \leq \gamma ,\;\;\text{if}\;\;\left\vert \Delta
s\right\vert <\theta \;\;\text{and}\;\;\left\vert s\right\vert >s_{0}.
\end{equation*}
\end{enumerate}

In order to write $M=r+\mathrm{i}h,$ as required above, we start from $%
M^{2}=u+\mathrm{i}v$ and obtain the system
\begin{equation*}
r^{2}-h^{2}=u,\;\;\;\;2rh=v.
\end{equation*}%
Solutions of the previous system belonging to the set of real numbers are%
\begin{eqnarray}
r &=&\pm \frac{\sqrt{2}}{2}\sqrt{\sqrt{u^{2}+v^{2}}+u},  \label{r} \\
h &=&\pm \frac{\sqrt{2}}{2}\sqrt{\sqrt{u^{2}+v^{2}}-u}.  \label{h}
\end{eqnarray}%
Assume $\frac{v}{u}\rightarrow 0,$ $u>0.$ Then by using the approximation $%
\left( 1+x\right) ^{a}\approx 1+ax,$ as $x\rightarrow 0^{+},$ from (\ref{r})
and (\ref{h}), we have%
\begin{equation}
r\approx \pm \sqrt{u},\;\;\;\;h\approx \pm \frac{1}{2}\frac{\left\vert
v\right\vert }{\sqrt{u}}.  \label{r,h}
\end{equation}

\begin{proposition}
Functions $M$ given by (\ref{zener-M}) and (\ref{a-b-M}) satisfy $\left(
\mathrm{A1}\right) $ - $\left( \mathrm{A4}\right) .$
\end{proposition}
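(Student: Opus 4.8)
The plan is to verify the four conditions (A1)--(A4) separately for each of the two explicit forms of $M$, namely $M(s)=\sqrt{(1+as^{\alpha})/(1+bs^{\alpha})}$ from (\ref{zener-M}) and $M(s)=\sqrt{\frac{\ln(bs)}{\ln(as)}\frac{as-1}{bs-1}}$ from (\ref{a-b-M}). In both cases the strategy is the same: compute the asymptotic expansion of $M(s)$ as $|s|\to\infty$ (and as $|s|\to 0$), read off $r=\func{Re}M$ and $h=\func{Im}M$ from the square-root formulas (\ref{r})--(\ref{h}) together with the linearization (\ref{r,h}), and then check each assumption by direct estimation. The observation (\ref{s-l}) already records the leading behavior: $|M(s)|\to\sqrt{a/b}$ as $|s|\to\infty$ and $|M(s)|\to 1$ as $|s|\to 0$, which is essentially (A1) once one also shows the imaginary part $h(s)$ decays.

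For (A1): writing $M^2(s)=u+\mathrm{i}v$, I would show that for the Zener case $M^2(s)=\frac{a}{b}+O(s^{-\alpha})$ so that $v/u\to 0$ and $u\to a/b>0$; then (\ref{r,h}) gives $r(s)\to\sqrt{a/b}=:c_\infty>0$ and $h(s)\to 0$. The distributed-order case is analogous: $\frac{as-1}{bs-1}\to\frac ab$ and $\frac{\ln(bs)}{\ln(as)}=1+\frac{\ln(b/a)}{\ln(as)}\to 1$, so again $M^2(s)\to a/b$ with vanishing imaginary part. The limit $\lim_{|s|\to 0}M(s)=c_0$ with $c_0=1$ in both cases is immediate since $1+as^\alpha\to 1$, $1+bs^\alpha\to 1$, and $\frac{as-1}{bs-1}\to 1$, $\frac{\ln(bs)}{\ln(as)}\to 1$. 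For (A3) and (A4): since $M(s)\to c_\infty$ and $M$ is analytic on $\mathbb{C}\setminus(-\infty,0]$, one has $\frac{\mathrm d}{\mathrm ds}(sM(s))=M(s)+sM'(s)$; I would show $sM'(s)\to 0$ (it is $O(s^{-\alpha})$ for Zener, $O(1/\ln s)$ for the distributed-order model), so the derivative tends to $c_\infty>0$, giving a uniform lower bound $c$ for $|s|$ large — this is (A3). For (A4), uniform continuity of $sM(s)$ for large $|s|$ follows from a bound on the derivative $\frac{\mathrm d}{\mathrm ds}(sM(s))$ on the relevant region (it is bounded there), via the mean value inequality along a segment; one then picks $\theta=\gamma/(\text{bound})$.

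The main obstacle is (A2), which is the genuinely delicate point: it links the sign of $\func{Im}M(s_n)$ at the (infinitely many, asymptotically located) zeros $s_n$ of $f$ to the sign of $\func{Im}s_n$. The idea is to locate the zeros $s_n$ asymptotically — from $f(s)=0$, i.e. $\tanh(\kappa sM(s))=-\kappa/(sM(s))$, and since $|sM(s)|\to\infty$ along the zeros one gets $\tanh(\kappa sM(s))\to 0$, forcing $\kappa s_n M(s_n)\approx \mathrm{i}\pi n$, so $s_n M(s_n)$ lies asymptotically on the imaginary axis; hence $\func{Re}(s_n M(s_n))$ is small and, using $M(s_n)\to\sqrt{a/b}$, one finds $s_n$ itself is asymptotically imaginary with $\xi_n=\func{Re}s_n$ controlled. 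Then, plugging this asymptotic location into the expansion of $h=\func{Im}M$ from (\ref{r,h}) — where $h\approx \pm\frac12|v|/\sqrt u$ with $v=\func{Im}M^2(s_n)$ — one must check that the sign of $v$ at $s=s_n$ is opposite to the sign of $\func{Im}s_n$. Concretely, for the Zener model $M^2-\frac ab\sim \frac{(a-b)}{b}\,b\,s^{-\alpha}$ up to constants, so $\func{Im}M^2(s)$ has the sign of $-\func{Im}(s^{-\alpha})=-\func{Im}(s^{-\alpha})$, and for $s$ near the positive imaginary axis $s^{-\alpha}$ has negative imaginary part, giving $v<0$, i.e. $h\le 0$ when $\func{Im}s_n>0$, as required; the lower-half-plane case is symmetric by conjugation, and the distributed-order case is handled the same way with $s^{-\alpha}$ replaced by the $1/\ln s$ correction term. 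I would carry out this sign bookkeeping carefully for each model, using $a\le b$ where needed, and that completes the verification of all four assumptions.
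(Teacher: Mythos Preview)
Your plan matches the paper's proof almost exactly for (A1), (A3), and (A4): compute the large-$|s|$ asymptotic of $M^2(s)$, extract $r,h$ via (\ref{r,h}), compute $\frac{\mathrm d}{\mathrm ds}(sM(s))$ explicitly to see it tends to $\sqrt{a/b}$, and use the mean-value inequality for (A4).

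For (A2), however, you make the argument harder than it needs to be. You propose first to locate the zeros $s_n$ asymptotically (via $\tanh(\kappa sM(s))\to 0$) and then to check the sign of $h$ at those specific points. The paper never locates the zeros: once one has the asymptotic formula for $h(s)$ --- in the Zener case $h(s)\approx -\tfrac12\sqrt{b/a}\,\frac{b-a}{b^2}\,\frac{\sin(\alpha\varphi)}{|s|^\alpha}$ with $\varphi=\arg s$ --- it is immediate that $h(s)\le 0$ for \emph{every} large $s$ with $\varphi\in(0,\pi)$, since $\sin(\alpha\varphi)>0$ there and $b\ge a$; by symmetry $h(s)\ge 0$ for $\varphi\in(-\pi,0)$. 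This sign condition therefore holds uniformly on the upper/lower half-planes and in particular at the $s_n$, so (A2) follows with no further work. The distributed-order case is identical with $\sin(\alpha\varphi)/|s|^\alpha$ replaced by $\varphi/\ln^2(a|s|)$. Your detour is not wrong in spirit, but the sign bookkeeping you sketch already slips: the correct leading term is $M^2-\tfrac{a}{b}\approx\frac{b-a}{b^2}\,s^{-\alpha}$ (positive coefficient, not $a-b$), so $v=\func{Im}M^2$ has the \emph{same} sign as $\func{Im}(s^{-\alpha})$, which is negative in the upper half-plane --- following your chain as written gives $v>0$ and hence the wrong sign for $h$. Reading the sign of $h$ directly off the global asymptotic formula, as the paper does, both simplifies the argument and sidesteps this error.
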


\begin{proof}
Consider $M$ given by (\ref{zener-M}). If we write $s=r\mathrm{e}^{\mathrm{i}%
\varphi }$, by (\ref{zener-M}), we have%
\begin{eqnarray*}
M^{2}\left( s\right) &=&\frac{1+as^{\alpha }}{1+bs^{\alpha }},\;\;a\leq b \\
&=&\frac{1+\left( a+b\right) r^{\alpha }\cos \left( \alpha \varphi \right)
+abr^{2\alpha }}{1+2br^{\alpha }\cos \left( \alpha \varphi \right) +\left(
br^{\alpha }\right) ^{2}}-\mathrm{i}\frac{\left( b-a\right) r^{\alpha }\sin
\left( \alpha \varphi \right) }{1+2br^{\alpha }\cos \left( \alpha \varphi
\right) +\left( br^{\alpha }\right) ^{2}} \\
&\approx &\frac{a}{b}-\mathrm{i}\frac{b-a}{b^{2}}\frac{\sin \left( \alpha
\varphi \right) }{r^{\alpha }},\;\;\text{as}\;\;\left\vert s\right\vert
\rightarrow \infty .
\end{eqnarray*}%
Using (\ref{r,h}) we obtain%
\begin{equation*}
r\left( s\right) \approx \pm \sqrt{\frac{a}{b}},\;\;\;\;h\left( s\right)
\approx \pm \frac{1}{2}\sqrt{\frac{b}{a}}\frac{b-a}{b^{2}}\frac{\left\vert
\sin \left( \alpha \varphi \right) \right\vert }{r^{\alpha }},\;\;\text{as}%
\;\;\left\vert s\right\vert \rightarrow \infty .
\end{equation*}%
Let $\varphi \in \left( 0,\pi \right) $, then $\sin \left( \alpha \varphi
\right) >0.$ This implies $\func{Re}\left( M^{2}\left( s\right) \right) >0$
and $\func{Im}\left( M^{2}\left( s\right) \right) <0.$ Therefore, we also
have $\func{Re}\left( M\left( s\right) \right) >0$ and $\func{Im}\left(
M\left( s\right) \right) <0.$ Similar arguments are valid if $\varphi \in
\left( -\pi ,0\right) .$ Hence, we finally have%
\begin{equation}
r\left( s\right) \approx \sqrt{\frac{a}{b}},\;\;\;\;h\left( s\right) \approx
-\frac{1}{2}\sqrt{\frac{b}{a}}\frac{b-a}{b^{2}}\frac{\sin \left( \alpha
\varphi \right) }{r^{\alpha }},\;\;\text{as}\;\;\left\vert s\right\vert
\rightarrow \infty .  \label{r,h-zen}
\end{equation}

Next we prove that $M$ in (\ref{zener-M}) satisfies $\left( \mathrm{A1}%
\right) .$ By (\ref{r,h-zen}) and (\ref{zener-M}), we have
\begin{equation*}
\lim_{\left\vert s\right\vert \rightarrow \infty }r\left( s\right) =\sqrt{%
\frac{a}{b}},\;\;\lim_{\left\vert s\right\vert \rightarrow \infty }h\left(
s\right) =0,\;\;\lim_{\left\vert s\right\vert \rightarrow 0}\left\vert
M\left( s\right) \right\vert =1.
\end{equation*}

Validity of assumption $\left( \mathrm{A2}\right) $ follows from (\ref%
{r,h-zen}).

In order to show that $M$ in (\ref{zener-M}) satisfies $\left( \mathrm{A3}%
\right) ,$ we use (\ref{zener-M}) and obtain
\begin{equation*}
\frac{\mathrm{d}}{\mathrm{d}s}\left( sM\left( s\right) \right) =M\left(
s\right) \left( 1-\frac{\alpha \left( b-a\right) s^{\alpha }}{2\left(
1+as^{\alpha }\right) \left( 1+bs^{\alpha }\right) }\right) .
\end{equation*}%
Thus, by (\ref{r,h-zen})%
\begin{equation}
\left\vert \frac{\mathrm{d}}{\mathrm{d}s}\left( sM\left( s\right) \right)
\right\vert \approx \sqrt{\frac{a}{b}},\;\;\text{as}\;\;\left\vert
s\right\vert \rightarrow \infty .  \label{izvod-sms-1}
\end{equation}

We have that there exists $\xi $ such that%
\begin{equation*}
\left\vert \left( s+\Delta s\right) M\left( s+\Delta s\right) -sM\left(
s\right) \right\vert \leq \left\vert \Delta s\right\vert \left\vert \left[
\frac{\mathrm{d}}{\mathrm{d}s}\left( sM\left( s\right) \right) \right]
_{s=\xi }\right\vert .
\end{equation*}%
Since $\frac{\mathrm{d}}{\mathrm{d}s}\left( sM\left( s\right) \right) ,$ by (%
\ref{izvod-sms-1}), is bounded as $\left\vert s\right\vert \rightarrow
\infty $ and if $\left\vert \Delta s\right\vert <\theta $ for some $\theta
>0,$ we have that $\left( \mathrm{A4}\right) $ is satisfied.

Now consider $M$ given by (\ref{a-b-M}). With $s=r\mathrm{e}^{\mathrm{i}%
\varphi }$ we have%
\begin{eqnarray*}
M^{2}\left( s\right) &=&\frac{\ln \left( bs\right) }{\ln \left( as\right) }%
\frac{as-1}{bs-1},\;\;a\leq b \\
&=&\frac{\ln \left( ar\right) \ln \left( br\right) +\varphi ^{2}}{\ln
^{2}\left( ar\right) +\varphi ^{2}}\frac{abr^{2}-\left( a+b\right) r\cos
\varphi +1}{\left( br\right) ^{2}-2br\cos \varphi +1}-\frac{\varphi \ln
\frac{b}{a}}{\ln ^{2}\left( ar\right) +\varphi ^{2}}\frac{\left( b-a\right)
r\sin \varphi }{\left( br\right) ^{2}-2br\cos \varphi +1} \\
&&-\mathrm{i}\left( \frac{\ln \left( ar\right) \ln \left( br\right) +\varphi
^{2}}{\ln ^{2}\left( ar\right) +\varphi ^{2}}\frac{\left( b-a\right) r\sin
\varphi }{\left( br\right) ^{2}-2br\cos \varphi +1}+\frac{\varphi \ln \frac{b%
}{a}}{\ln ^{2}\left( ar\right) +\varphi ^{2}}\frac{abr^{2}-\left( a+b\right)
r\cos \varphi +1}{\left( br\right) ^{2}-2br\cos \varphi +1}\right) \\
&\approx &\frac{a}{b}-\mathrm{i}\frac{a}{b}\ln \frac{b}{a}\frac{\varphi }{%
\ln ^{2}\left( ar\right) },\;\;\text{as}\;\;\left\vert s\right\vert
\rightarrow \infty .
\end{eqnarray*}%
Using (\ref{r,h}) we obtain%
\begin{equation*}
r\left( s\right) =\pm \sqrt{\frac{a}{b}},\;\;\;\;h\left( s\right) =\pm \frac{%
1}{2}\sqrt{\frac{a}{b}}\ln \frac{b}{a}\frac{\left\vert \varphi \right\vert }{%
\ln ^{2}\left( ar\right) }.
\end{equation*}%
Let $\varphi \in \left( 0,\pi \right) $, then $\sin \left( \alpha \varphi
\right) >0.$ This implies $\func{Re}\left( M^{2}\left( s\right) \right) >0$
and $\func{Im}\left( M^{2}\left( s\right) \right) <0.$ Therefore, we also
have $\func{Re}\left( M\left( s\right) \right) >0$ and $\func{Im}\left(
M\left( s\right) \right) <0.$ Similar arguments are valid if $\varphi \in
\left( -\pi ,0\right) .$ Hence, we finally have%
\begin{equation}
r\left( s\right) =\sqrt{\frac{a}{b}},\;\;\;\;h\left( s\right) =-\frac{1}{2}%
\sqrt{\frac{a}{b}}\ln \frac{b}{a}\frac{\varphi }{\ln ^{2}\left( ar\right) }.
\label{r,h-a-b}
\end{equation}

Next we prove that (\ref{a-b-M}) satisfies $\left( \mathrm{A1}\right) .$ By (%
\ref{r,h-a-b}) and (\ref{a-b-M}), we have
\begin{equation*}
\lim_{\left\vert s\right\vert \rightarrow \infty }r\left( s\right) =\sqrt{%
\frac{a}{b}},\;\;\lim_{\left\vert s\right\vert \rightarrow \infty }h\left(
s\right) =0,\;\;\lim_{\left\vert s\right\vert \rightarrow 0}\left\vert
M\left( s\right) \right\vert =1.
\end{equation*}

Validity of assumption $\left( \mathrm{A2}\right) $ follows from (\ref%
{r,h-a-b}).

In order to show $M$ in (\ref{a-b-M}) satisfies $\left( \mathrm{A3}\right) ,$
we use (\ref{a-b-M}) and obtain
\begin{equation*}
\frac{\mathrm{d}}{\mathrm{d}s}\left( sM\left( s\right) \right) =M\left(
s\right) \left( 1-\frac{\ln \frac{b}{a}}{2\ln \left( as\right) \ln \left(
bs\right) }+\frac{\left( b-a\right) s}{2\left( as-1\right) \left(
bs-1\right) }\right) .
\end{equation*}%
Thus, by (\ref{r,h-a-b})%
\begin{equation}
\left\vert \frac{\mathrm{d}}{\mathrm{d}s}\left( sM\left( s\right) \right)
\right\vert \approx \sqrt{\frac{a}{b}},\;\;\text{as}\;\;\left\vert
s\right\vert \rightarrow \infty .  \label{izvod-sms-2}
\end{equation}

Using the same arguments as above, we have that $\left( \mathrm{A4}\right) $
is satisfied, since $sM\left( s\right) ,$ $s\in V,$ by (\ref{izvod-sms-2}),
has bounded first derivative.
\end{proof}

The existence and the uniqueness of $u$ and $\sigma ,$ as solutions to
system (\ref{sys-1}) - (\ref{BC}) is guaranteed by the fact that $M$ in all
two cases satisfies $\left( \mathrm{A1}\right) $ - $\left( \mathrm{A4}%
\right) .$ Recall, $f$ is given by (\ref{polovi-01}) and $s_{n},$ $n\in
\mathbb{N}
,$ are solutions of (\ref{polovi-0}). The multiplicity of zeros $s_{n}$ is
one for $n$ large enough.

\begin{theorem}[\protect\cite{APZ-5}]
\label{thmP}Let $F\in \mathcal{S}_{+}^{\prime }$ and suppose that $M$
satisfies assumptions $\left( \mathrm{A1}\right) $ - $\left( \mathrm{A4}%
\right) .$ Then the unique solution $u$ to (\ref{sys-1}) - (\ref{BC}) is
given by%
\begin{equation*}
u\left( x,t\right) =F\left( t\right) \ast P\left( x,t\right) ,\;\;x\in \left[
0,1\right] ,\;t>0,
\end{equation*}%
where%
\begin{eqnarray*}
P\left( x,t\right) &=&\frac{1}{\pi }\dint\nolimits_{0}^{\infty }\func{Im}%
\left( \frac{M\left( q\mathrm{e}^{-\mathrm{i}\pi }\right) \sinh \left(
\kappa xqM\left( q\mathrm{e}^{-\mathrm{i}\pi }\right) \right) }{qM\left( q%
\mathrm{e}^{-\mathrm{i}\pi }\right) \sinh \left( \kappa qM\left( q\mathrm{e}%
^{-\mathrm{i}\pi }\right) \right) +\kappa \cosh \left( \kappa qM\left( q%
\mathrm{e}^{-\mathrm{i}\pi }\right) \right) }\right) \frac{\mathrm{e}^{-qt}}{%
q}\mathrm{d}q \\
&&+2\sum_{n=1}^{\infty }\func{Re}\left( \func{Res}\left( \tilde{P}\left(
x,s\right) \mathrm{e}^{st},s_{n}\right) \right) ,\;\;x\in \left[ 0,1\right]
,\;t>0, \\
P\left( x,t\right) &=&0,\;\;x\in \left[ 0,1\right] ,\;t<0.
\end{eqnarray*}%
The residues are given by%
\begin{equation*}
\func{Res}\left( \tilde{P}\left( x,s\right) \mathrm{e}^{st},s_{n}\right) =%
\left[ \frac{1}{s}\frac{M\left( s\right) \sinh \left( \kappa xsM\left(
s\right) \right) }{\frac{\mathrm{d}}{\mathrm{d}s}f\left( s\right) }\mathrm{e}%
^{st}\right] _{s=s_{n}},\;\;x\in \left[ 0,1\right] ,\;t>0,
\end{equation*}

Then $P\in C\left( \left[ 0,1\right] \times \left[ 0,\infty \right) \right) $
and $u\in C\left( \left[ 0,1\right] ,\mathcal{S}_{+}^{\prime }\right) .$ In
particular, if $F\in L_{loc}^{1}\left( \left[ 0,\infty \right) \right) ,$
then $u$ is continuous on $\left[ 0,1\right] \times \left[ 0,\infty \right)
. $
\end{theorem}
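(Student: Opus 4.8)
The passage to the Laplace domain carried out in \S\,\ref{cfs} is reversible, so the plan is to make it rigorous, read off uniqueness, and then recover $P$ by inverting $\tilde{P}$ via contour integration; the regularity of $P$ will drop out of the resulting formula. Since the Laplace transform is injective on $\mathcal{S}_{+}^{\prime}$ and, as recalled before (\ref{S-LT-1}), converts ${}_{0}\mathrm{D}_{t}^{\gamma}$ into multiplication by $s^{\gamma}$ on the relevant subspace, any solution $(u,\sigma)$ of (\ref{sys-1})--(\ref{BC}) in $C([0,1],\mathcal{S}_{+}^{\prime})$ has a transform satisfying (\ref{S-LT-1})--(\ref{S-LT-3}); solving that linear boundary value problem for $\tilde{u}(\cdot,s)$ reproduces (\ref{u,sigma-tilda})--(\ref{Q-tilda}), so $\tilde{u}=\tilde{F}\tilde{P}$ and, by injectivity, $u=F\ast P$, which gives uniqueness. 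As $\mathcal{S}_{+}^{\prime}$ is a convolution algebra, the convolution makes sense once $P(x,\cdot)\in\mathcal{S}_{+}^{\prime}$, and continuity of $x\mapsto P(x,\cdot)$ (and of $F\ast P$ when $F\in L_{loc}^{1}$) will follow from continuity of $P$ on $[0,1]\times[0,\infty)$. Thus everything reduces to the analysis of $\tilde{P}$ in (\ref{P-tilda}).

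By (\ref{M}) the function $M$, hence $\tilde{P}(x,\cdot)$, is holomorphic on $\mathbb{C}\backslash(-\infty,0]$ with a branch cut along $(-\infty,0]$; its only further singularities are the zeros $s_{n}$ of $f$ in (\ref{polovi-01}). Moreover $\tilde{P}(x,\cdot)$ stays bounded at $s=0$ (by (A1), $M(s)\to c_{0}$, $f(s)\to\kappa\neq0$, $\sinh(\kappa xsM(s))\sim\kappa xc_{0}s$, so $\tilde{P}(x,s)\to c_{0}^{2}x$), hence contributes no residue. Writing $g(s)=sM(s)$ one has
\[
f'(s)=g'(s)\bigl[(1+\kappa^{2})\sinh(\kappa g(s))+\kappa\,g(s)\cosh(\kappa g(s))\bigr],
\]
and at a zero the identity $g(s_{n})\sinh(\kappa g(s_{n}))=-\kappa\cosh(\kappa g(s_{n}))$ turns this into $g'(s_{n})\,\kappa\cosh(\kappa g(s_{n}))\,(g(s_{n})-(1+\kappa^{2})/g(s_{n}))$; since $|g(s_{n})|\to\infty$, $g'(s_{n})$ is bounded below by (A3), and $\cosh(\kappa g(s_{n}))$ stays bounded away from $0$ (the equation forces $\tanh(\kappa g(s_{n}))=-\kappa/g(s_{n})\to0$, i.e.\ $\kappa g(s_{n})$ approaches the zeros of $\sinh$), the zeros $s_{n}$ are simple for $n$ large, with residue $\frac{1}{s}\frac{M(s)\sinh(\kappa xsM(s))}{f'(s)}\mathrm{e}^{st}\big|_{s=s_{n}}$. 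I would then begin with the Bromwich integral for $P$ along an abscissa to the right of every $s_{n}$ and deform it leftward along large circular arcs and around the cut. By (A1), $\func{Re}(\kappa sM(s))\to-\infty$ on the left arcs, which makes (\ref{P-tilda}) decay like $|s|^{-2}$ there, so the arc contributions vanish — provided the arcs are placed at radii $R_{k}\to\infty$ staying a fixed distance from all $s_{n}$. Wrapping around the cut, parametrising its two sides by $s=q\mathrm{e}^{\pm\mathrm{i}\pi}$ and using the reflection principle $\tilde{P}(x,\bar{s})=\overline{\tilde{P}(x,s)}$ ($\tilde{P}$ is real on $(0,\infty)$) — which also makes the zero set of $f$ symmetric about $\mathbb{R}$, so the $s_{n}$ come in conjugate pairs — collapses the Hankel integral to $\frac{1}{\pi}\int_{0}^{\infty}\func{Im}\,\tilde{P}(x,q\mathrm{e}^{-\mathrm{i}\pi})\,\mathrm{e}^{-qt}\,\mathrm{d}q$, which (inserting the value of $\tilde{P}$ on the cut) is the integral term in the statement, while the conjugate pairs of poles combine to $2\sum_{n}\func{Re}\bigl(\func{Res}(\tilde{P}(x,s)\mathrm{e}^{st},s_{n})\bigr)$. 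For $t<0$ one closes to the right, where $\tilde{P}$ is holomorphic ($f$ has no zeros in $\func{Re}s\geq0$, by (A2) and the thermodynamic sign restrictions on $M$ recalled in the Remark) and decays, giving $P(x,t)=0$.

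It remains to establish the regularity. Along the cut the integrand is continuous in $(x,t,q)$, stays bounded as $q\to0$ (from the $s^{\alpha}$-expansion of $M$ in the Zener case, the $1/\ln q$-expansion in the distributed-order case) and decays as $q\to\infty$ like $\mathrm{e}^{-qt}$ for $t>0$ and like $q^{-2}$ for $t=0$ by (A1); hence it is dominated, uniformly for $(x,t)$ in $[0,1]\times[0,\infty)$, by a fixed $q$-integrable function, so the integral term is continuous. For the series, the formula for $f'(s_{n})$ together with (A3) and the boundedness of $\func{Re}(\kappa g(s_{n}))$ and of $|\sinh(\kappa xg(s_{n}))|$ (uniform in $x\in[0,1]$) give $|\func{Res}(\tilde{P}(x,s)\mathrm{e}^{st},s_{n})|\lesssim|s_{n}|^{-2}\mathrm{e}^{t\func{Re}s_{n}}$; since (A2) forces $\func{Re}s_{n}\leq0$ and the zeros satisfy $|s_{n}|\sim c\,n$, the series converges uniformly on $[0,1]\times[0,\infty)$. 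Therefore $P\in C([0,1]\times[0,\infty))$, $x\mapsto P(x,\cdot)$ is $\mathcal{S}_{+}^{\prime}$-continuous, and $u=F\ast P$ is continuous in $x$ and jointly continuous when $F\in L_{loc}^{1}$.

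The main obstacle is the contour deformation of the second paragraph: not the residue bookkeeping, but the uniform estimate that the circular arcs contribute nothing. This forces a genuinely quantitative study of the zeros $s_{n}$ — that they form a sequence with $|s_{n}|\sim cn$ and $\func{Re}s_{n}\leq0$, and, above all, that one can choose radii $R_{k}\to\infty$ along which $|f|$ is bounded below so the arcs avoid all poles. It is precisely here that assumption (A4), the uniform continuity of $sM$ for large $|s|$, is used, while (A1)--(A3) supply the decay, the simplicity of the poles, and the residue bounds.
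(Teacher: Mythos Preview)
The paper does not prove Theorem~\ref{thmP}: it is stated with the attribution \cite{APZ-5}, and the text immediately preceding it says explicitly that the existence and uniqueness theorems ``are proven in \cite{APZ-5}''. There is therefore nothing in the present paper to compare your argument against beyond the formal derivation of $\tilde P$ in \S\,\ref{cfs}, which you correctly take as the starting point.

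Your outline --- injectivity of the Laplace transform on $\mathcal{S}_{+}^{\prime}$ for uniqueness, then deformation of the Bromwich contour onto a Hankel path around the cut plus a residue series over the zeros of $f$, with (A1)--(A4) supplying respectively the limits of $M$, the location of the poles, the simplicity of the poles and residue bounds, and the existence of safe radii $R_{k}$ --- is precisely the strategy one expects for a result of this shape, and it matches what \cite{APZ-5} does. Your computation of $f'(s_{n})$ and of the limit $\tilde P(x,s)\to c_{0}^{2}x$ at $s=0$ are both correct.

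Two places are too quick even for a sketch. First, (A2) alone does not give $\operatorname{Re}s_{n}\le 0$: (A2) only fixes the sign of $\operatorname{Im}M(s_{n})$ in each half-plane, and one still has to feed this into the real/imaginary parts of the equation $g(s)\tanh(\kappa g(s))=-\kappa$ (using also $r(s_{n})\to c_{\infty}>0$ from (A1)) to exclude zeros with positive real part. You invoke (A2) twice for this conclusion without indicating the mechanism. Second, the arc estimate is not driven by ``$\operatorname{Re}(\kappa sM(s))\to-\infty$ on the left arcs'': near $\arg s=\pm\pi/2$ that real part is small, $\sinh$ and $\cosh$ oscillate, and $f$ itself can be small. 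The genuine content is that along suitably chosen circles $|f|$ admits a \emph{uniform lower bound}, so that $|\tilde P|\lesssim |s|^{-2}$ on the entire arc; (A4) is what makes this work, as you note in your last paragraph, but the body of the sketch understates it. These are exactly the two places where the proof in \cite{APZ-5} does real work.
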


The following theorem is related to stress $\sigma .$ We formulate this
theorem with $F=H,$ where $H$ denotes the Heaviside function, while the more
general cases of $F$ are discussed in Remark \ref{o-efu}, below.

\begin{theorem}[\protect\cite{APZ-5}]
\label{thmQ}Let $F=H$ and suppose that $M$ satisfies assumptions $\left(
\mathrm{A1}\right) $ - $\left( \mathrm{A4}\right) .$ Then the unique
solution $\sigma _{H}$ to (\ref{sys-1}) - (\ref{BC}), is given by%
\begin{eqnarray}
\sigma _{H}\left( x,t\right) &=&H\left( t\right) +\frac{\kappa }{\pi }%
\dint\nolimits_{0}^{\infty }\func{Im}\left( \frac{\cosh \left( \kappa
xqM\left( q\mathrm{e}^{\mathrm{i}\pi }\right) \right) }{qM\left( q\mathrm{e}%
^{\mathrm{i}\pi }\right) \sinh \left( \kappa qM\left( q\mathrm{e}^{\mathrm{i}%
\pi }\right) \right) +\kappa \cosh \left( \kappa qM\left( q\mathrm{e}^{%
\mathrm{i}\pi }\right) \right) }\right) \frac{\mathrm{e}^{-qt}}{q}\mathrm{d}q
\notag \\
&&+2\sum_{n=1}^{\infty }\func{Re}\left( \func{Res}\left( \tilde{\sigma}%
_{H}\left( x,s\right) \mathrm{e}^{st},s_{n}\right) \right) ,\;\;x\in \left[
0,1\right] ,\;t>0,  \label{Q1} \\
\sigma _{H}\left( x,t\right) &=&0,\;\;x\in \left[ 0,1\right] ,\;t<0.  \notag
\label{Q0}
\end{eqnarray}%
The residues are given by%
\begin{equation*}
\func{Res}\left( \sigma _{H}\left( x,s\right) \mathrm{e}^{st},s_{n}\right) =%
\left[ \frac{\kappa \cosh \left( \kappa xsM\left( s\right) \right) }{s\frac{%
\mathrm{d}}{\mathrm{d}s}f\left( s\right) }\mathrm{e}^{st}\right]
_{s=s_{n}},\;\;x\in \left[ 0,1\right] ,\;t>0.
\end{equation*}%
In particular, $\sigma _{H}$ is continuous on $\left[ 0,1\right] \times %
\left[ 0,\infty \right) .$
\end{theorem}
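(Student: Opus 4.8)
The plan is to run the same Laplace‑inversion scheme that produces Theorem \ref{thmP} in \cite{APZ-5}, now for the stress. Taking $F=H$ gives $\tilde F(s)=1/s$ for $\func{Re}s>0$, so by (\ref{u,sigma-tilda}) and (\ref{Q-tilda}),
\begin{equation*}
\tilde\sigma_H\left(x,s\right)=\tilde F\left(s\right)\tilde Q\left(x,s\right)=\frac{\kappa\cosh\left(\kappa xsM\left(s\right)\right)}{s\,f\left(s\right)},\quad x\in\left[0,1\right],\ \func{Re}s>0,
\end{equation*}
with $f$ the function (\ref{polovi-01}). Since the preceding Proposition shows that $M$ in the forms (\ref{zener-M}) and (\ref{a-b-M}) satisfies $\left(\mathrm{A1}\right)$--$\left(\mathrm{A4}\right)$, the analytic‑continuation and contour‑deformation apparatus of \cite{APZ-5} applies to $\tilde\sigma_H$ just as it does to $\tilde P$ in Theorem \ref{thmP}.

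First I locate the singularities of $s\mapsto\tilde\sigma_H(x,s)\mathrm e^{st}$ on $\mathbb{C}$: the cut $(-\infty,0]$ inherited from $M$ (hence from $s^\alpha$, resp.\ from $\ln\left(as\right),\ln\left(bs\right)$); a contribution from the factor $1/s$ at the branch point $s=0$, where $\tilde\sigma_H(x,s)\sim 1/s$ because $f(0)=\kappa$ by $\left(\mathrm{A1}\right)$ and $\cosh 0=1$; and the zeros $s_n$ of $f$, i.e.\ the solutions of (\ref{polovi-0}), which are simple for $n$ large by $\left(\mathrm{A3}\right)$. Next I write the Bromwich integral $\sigma_H(x,t)=\frac{1}{2\pi\mathrm i}\int_{c-\mathrm i\infty}^{c+\mathrm i\infty}\tilde\sigma_H(x,s)\mathrm e^{st}\,\mathrm ds$, close the contour to the left along a large arc and deform it around the cut and around a shrinking loop at $s=0$. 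Cauchy's theorem then splits $\sigma_H$ into three pieces: the small loop at the branch point contributes $H(t)$ (the coefficient being $\kappa\cosh 0/f(0)=1$); parametrising the cut by $s=q\mathrm e^{\pm\mathrm i\pi}$ and using $M(\bar s)=\overline{M(s)}$ collapses the loop around $(-\infty,0]$ to $\frac{\kappa}{\pi}\int_0^\infty\func{Im}\bigl(\cdots\bigr)\frac{\mathrm e^{-qt}}{q}\,\mathrm dq$ as in (\ref{Q1}); and the zeros contribute $2\sum_{n\geq1}\func{Re}\bigl(\func{Res}(\tilde\sigma_H(x,s)\mathrm e^{st},s_n)\bigr)$, where the factor $2$ and the real part again reflect the conjugate‑pair structure $\{s_n,\bar s_n\}$, and for the simple poles $\func{Res}=\bigl[\kappa\cosh(\kappa xsM(s))\,\mathrm e^{st}/(s\,f'(s))\bigr]_{s=s_n}$, which is the quotient stated in the theorem.

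The main obstacle is, as usual, the convergence of this representation and the vanishing of the large‑arc term as the radius tends to infinity; this is exactly where $\left(\mathrm{A1}\right)$--$\left(\mathrm{A4}\right)$ are used. Assumption $\left(\mathrm{A1}\right)$ gives $M(s)\to c_\infty>0$ with $\func{Im}M\to0$, so $f(s)$ grows exponentially on the arc and dominates $\cosh(\kappa xsM(s))$ for all $x\in[0,1]$ — including the borderline $x=1$, where $\cosh(\kappa sM)/f(s)\sim 1/(sM)\to0$; assumption $\left(\mathrm{A2}\right)$ confines the $s_n$ so that $\func{Re}s_n\to-\infty$, making $\mathrm e^{s_nt}$ decay; assumptions $\left(\mathrm{A3}\right)$ and $\left(\mathrm{A4}\right)$ give the lower bound and the uniform continuity of $sM(s)$ needed to bound $f'(s_n)$ away from zero and to estimate the arc between consecutive zeros. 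With these estimates the series converges uniformly in $x\in[0,1]$ and locally uniformly in $t>0$, and the cut integral converges absolutely, which yields continuity of $\sigma_H$ on $[0,1]\times[0,\infty)$. Since the whole argument is structurally identical to the proof of Theorem \ref{thmP} already carried out in \cite{APZ-5}, the only genuinely new ingredient is the bookkeeping of the extra simple pole at $s=0$, which is precisely what produces the leading term $H(t)$.
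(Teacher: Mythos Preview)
The paper does not actually prove this theorem: both Theorem~\ref{thmP} and Theorem~\ref{thmQ} are merely quoted from \cite{APZ-5}, with no argument given here beyond the formal derivation of $\tilde\sigma$ in (\ref{u,sigma-tilda})--(\ref{Q-tilda}). So there is no in-paper proof to compare against.

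That said, your sketch is the standard Laplace-inversion argument one expects in \cite{APZ-5}, and it is structurally sound: the identification $\tilde\sigma_H(x,s)=\kappa\cosh(\kappa xsM(s))/(s f(s))$ is correct, the contribution $H(t)$ from the shrinking loop at $s=0$ is computed correctly via $f(0)=\kappa$, the Hankel-type integral around the cut and the residue sum over conjugate pairs $\{s_n,\bar s_n\}$ match the stated formula, and you correctly flag that the new feature relative to Theorem~\ref{thmP} is precisely the extra $1/s$ producing the $H(t)$ term. Your use of $(\mathrm{A1})$--$(\mathrm{A4})$ to justify the arc vanishing and the series convergence is in line with how the paper frames those assumptions.
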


\begin{remark}[\protect\cite{APZ-5}]
\label{o-efu}\qquad

\begin{enumerate}
\item The assumption $F=H$ in Theorem \ref{thmQ} can be relaxed by requiring
that $F$ is locally integrable and
\begin{equation*}
\tilde{F}\left( s\right) \approx \frac{1}{s^{\alpha }},\;\;\text{as}%
\;\;\left\vert s\right\vert \rightarrow \infty ,
\end{equation*}%
for some $\alpha \in \left( 0,1\right) .$ This condition ensures the
convergence of the series in (\ref{Q1}).

\item If $F=\delta ,$ or even $F\left( t\right) =\frac{\mathrm{d}^{k}}{%
\mathrm{d}t^{k}}\delta \left( t\right) ,$ one uses $\sigma _{H},$ given by (%
\ref{Q1}), in order to obtain $\sigma $ as the $k+1$-th distributional
derivative:%
\begin{equation*}
\sigma =\frac{\mathrm{d}^{k+1}}{\mathrm{d}t^{k+1}}\sigma _{H}\in C\left( %
\left[ 0,1\right] ,\mathcal{S}_{+}^{\prime }\right) .
\end{equation*}
\end{enumerate}
\end{remark}

\section{Numerical examples \label{ne}}

The displacement $u$ as a solution to (\ref{sys-1}) - (\ref{BC}) is given in
Theorem \ref{thmP}. We present various numerical examples for constitutive
models: fractional Zener and distributed-order model of a solid-like
viscoelastic body that are distinguished by the form of $M$: (\ref{zener-M})
and (\ref{a-b-M}), respectively.

\subsection{The case $F=\protect\delta $}

In order to plot time dependence of the displacement $u$ for the several
points of the rod as well as for the body attached to its free end, we chose
the fractional Zener model and the force acting on the body to be the Dirac
delta distribution, i.e. $F=\delta .$ We fix the parameters describing the
rod: $a=0.2,$ $b=0.6,$ $\alpha =0.45$ and also fix the ratio between the
masses of rod and body $\varkappa :=\kappa ^{2}=1.$ Plot of $u$ as a
function of time $t\ $for various points of a rod is shown in Figure \ref%
{fig-2}.
\begin{figure}[h]
\begin{center}
\includegraphics[scale=1]{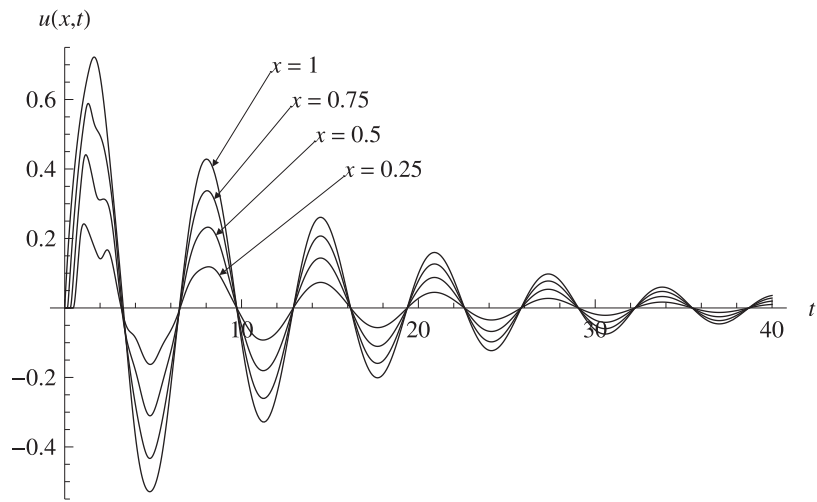}
\end{center}
\caption{Displacement $u(x,t)$ in the case $F=\protect\delta $ for $%
\varkappa =1$ as a function of time $t$ at $x\in \{0.25,0.5,0.75,1\}$ for $%
t\in (0,40)$.}
\label{fig-2}
\end{figure}
It is evident that the oscillations of the rod and a body are damped, since
the material is viscoelastic. One notices that initially there is a
transitional regime of the oscillations. Afterwards, the curves resemble the
curves of the damped linear oscillator.

In order to examine the transitional regime more closely, in Figures \ref%
{fig-4} - \ref{fig-6} we present the plots of $u$ for smaller values of
time, but for different values of $\varkappa \in \left\{ 0.5,1,2\right\} $.
\begin{figure}[h]
\begin{minipage}{72mm}
 \centering
 \includegraphics[scale=0.7]{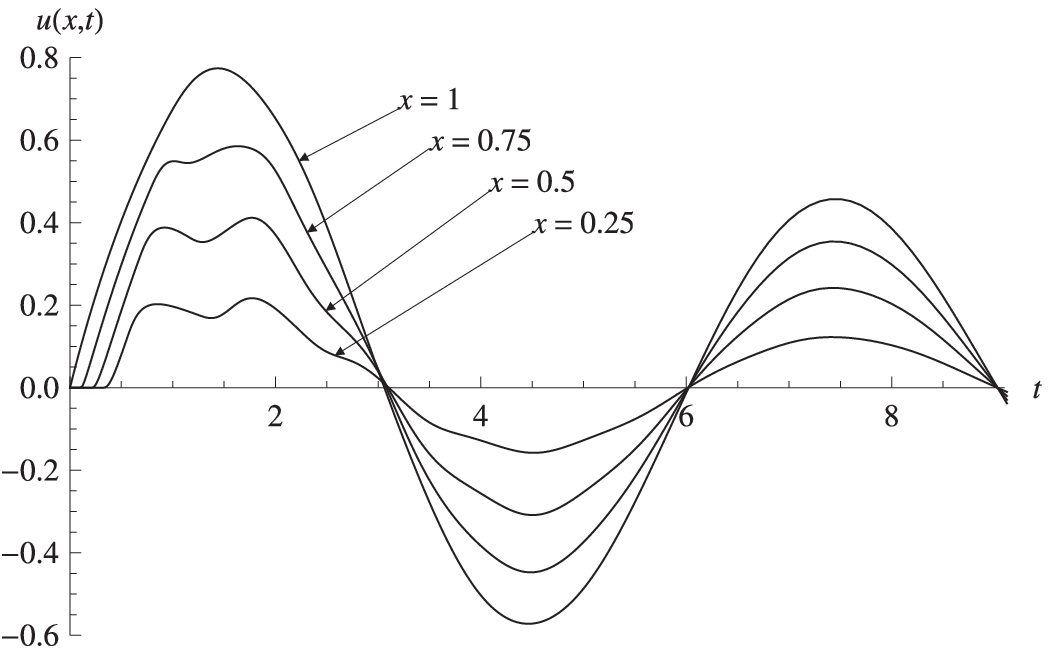}
 \caption{Displacement
 $u(x,t)$ in the case $F=\delta$ for $\varkappa=0.5$ as a function of time $t$ at
 $x\in \{0.25,0.5,0.75,1\}$ for $t\in (0,9)$.}
 \label{fig-4}
 \end{minipage}
\hfil
\begin{minipage}{72mm}
 \centering
 \includegraphics[scale=0.7]{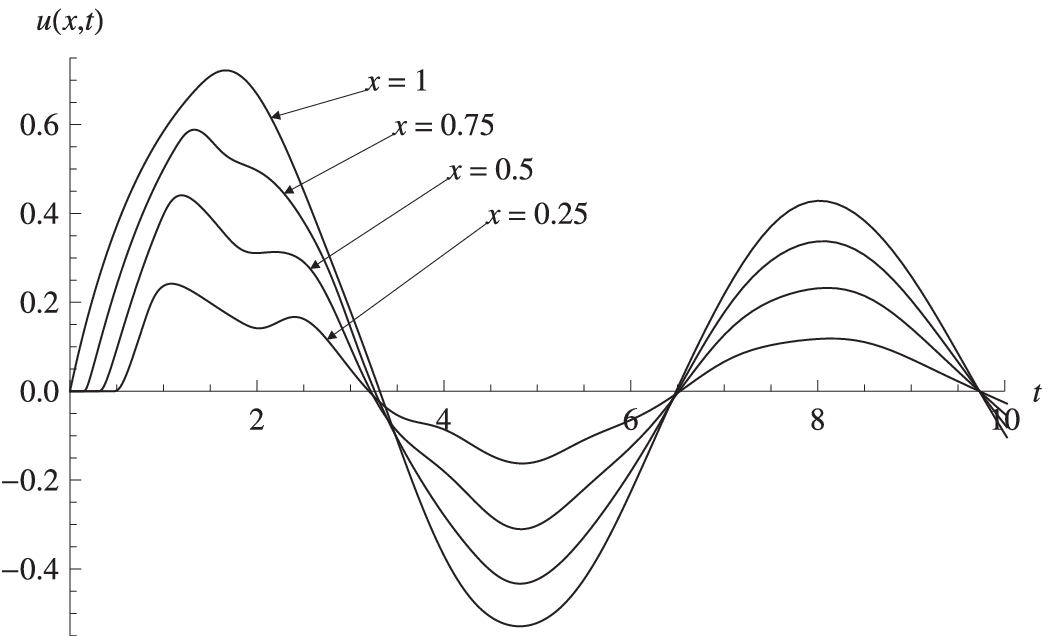}
 \caption{Displacement
 $u(x,t)$ in the case $F=\delta$ for $\varkappa=1$ as a function of time $t$ at
 $x\in \{0.25,0.5,0.75,1\}$ for $t\in (0,10)$.}
 \label{fig-5}
 \end{minipage}
\hfil
\par
\begin{center}
\begin{minipage}{72mm}
 \centering
 \includegraphics[scale=0.7]{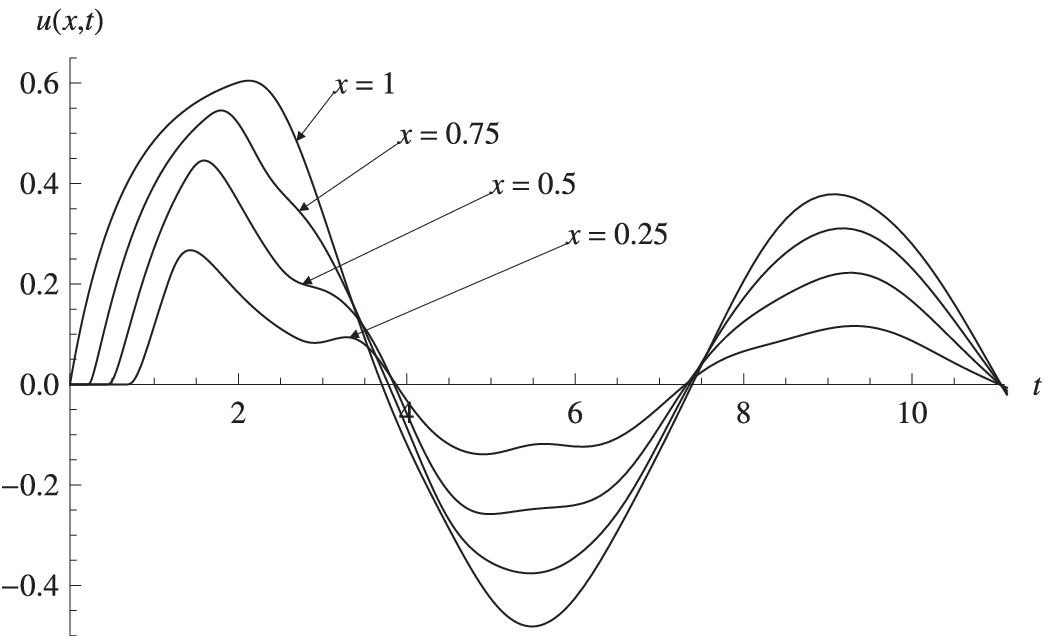}
 \caption{Displacement
 $u(x,t)$ in the case $F=\delta$ for $\varkappa=2$ as a function of time $t$ at
 $x\in \{0.25,0.5,0.75,1\}$ for $t\in (0,11)$.}
 \label{fig-6}
 \end{minipage}
\end{center}
\end{figure}
We notice that the shape of the curves depends on the ratio between the
masses $\varkappa ,$ while later, the shape resembles to the shape of curves
for damped oscillations. It could be noticed that regardless of the value of
$\varkappa $ there is a delay in starting oscillation for the points that
are further away from the free end of a rod. This is due to the finite speed
of wave propagation. Namely, the body ($x=1$), which is subject to the
action of the force, starts oscillating at $t=0,$ while the delay in the
starting time-instant of the oscillation is greater as the point is further
from the point where the force acts. Moreover, we see that different points
of the rod do not come to their initial position, for the first time, at the
same time-instant (see Figure \ref{fig-6}). This, as well as the initial
delay depend on the mass ratio $\varkappa .$ For the influence of $\varkappa
$ on the initial delay compare Figures \ref{fig-4} - \ref{fig-6}. Later on,
again depending on $\varkappa ,$ the motion of the points become
synchronized.

Figures \ref{fig-7} and \ref{fig-8} present the plots of displacement $u$
for fixed point of the rod $x=0.5$ if the ratio between masses $\varkappa $
varies.
\begin{figure}[h]
\centering
\includegraphics[scale=1]{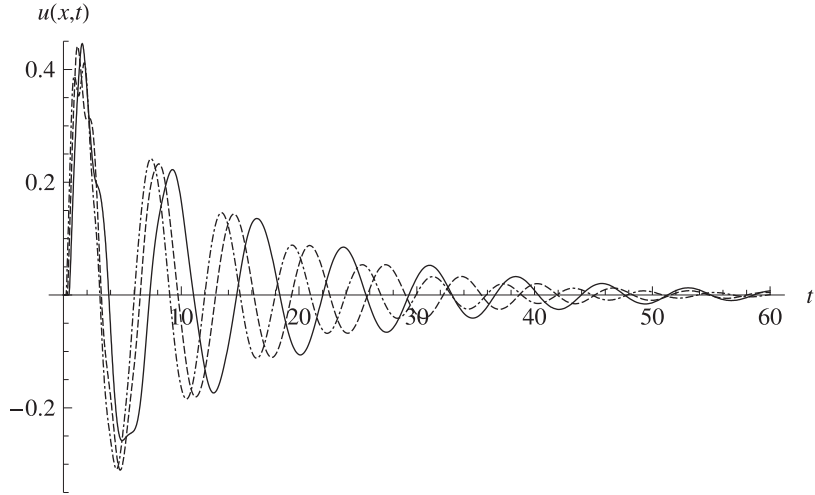}
\caption{Displacement $u(x,t)$ in the case $F=\protect\delta $ at $x=0.5$ as
a function of time $t\in (0,40)$ for $\varkappa =0.5$ - dot-dashed line, $%
\varkappa =1$ - dashed line and $\varkappa =2$ - solid line.}
\label{fig-7}
\end{figure}
One sees from Figure \ref{fig-7} that the larger the mass of a rod is (then
the value of $\varkappa $ is greater) the quasi-period (time between two
consecutive passage of a fixed point through its initial position) of the
oscillations is greater. That is due to the increased rod's inertia. Also,
for larger times there is no significant influence of $\varkappa $ on the
heights and widths of the peaks which indicates that the damping effects are
only due to the parameters $a,$ $b$ and $\alpha $ figuring in the
constitutive equation.
\begin{figure}[h]
\centering
\includegraphics[scale=1]{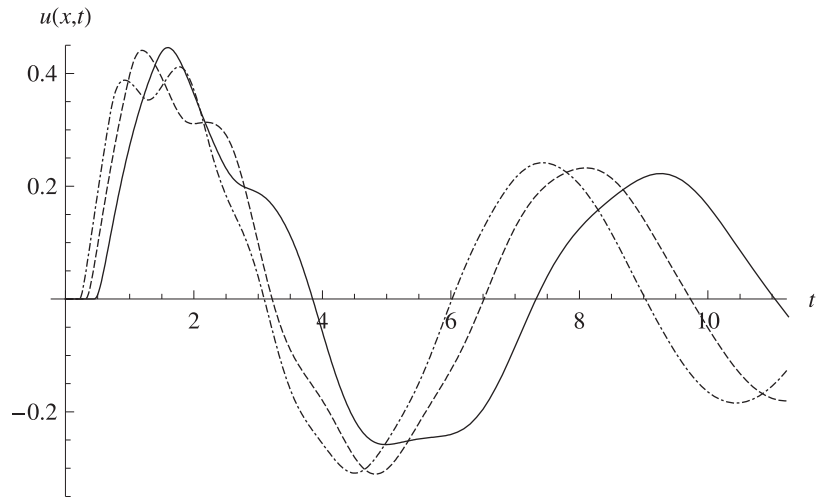}
\caption{Displacement $u(x,t)$ in the case $F=\protect\delta $ at $x=0.5$ as
a function of time $t\in (0,11)$ for $\varkappa =0.5$ - dot-dashed line, $%
\varkappa =1$ - dashed line and $\varkappa =2$ - solid line.}
\label{fig-8}
\end{figure}
Figure \ref{fig-8} shows that the shape of the curve in transitional regime
strongly depends on $\varkappa .$ Moreover, the delay in the oscillations
increases as the $\varkappa $ increases which indicates that the speed of
the wave propagation depends on the mass ratio.

\subsection{The case $F=H$}

The aim of this section is the qualitative analysis of the behavior of a
displacement $u$ when there is a force, given in the form of the Heaviside
function, i.e., $F=H,$ acting at the attached body. Thus, our results
correspond to a creep experiment. The rod is modelled by the fractional
Zener model, i.e., the function $M$ is given by (\ref{zener-M}). The
parameters describing the rod are: $a=0.1,$ $b=0.9,$ $\alpha =0.8.$ We
present plots of $u$ for the ratio between the masses of rod and body $%
\varkappa =1.$

Figure \ref{fig-9} present the long-time behavior of displacement $u$. One
notices that the rod creeps to a finite value of displacement so that $%
\lim_{t\rightarrow \infty }u\left( x,t\right) =x,$ $x\in \left[ 0,1\right] $%
. Figure \ref{fig-12} present the short-time behavior of $u$. We see that
there is a delay in starting time-instant of a point of a rod.
\begin{figure}[h]
\begin{minipage}{72mm}
 \centering
 \includegraphics[scale=0.7]{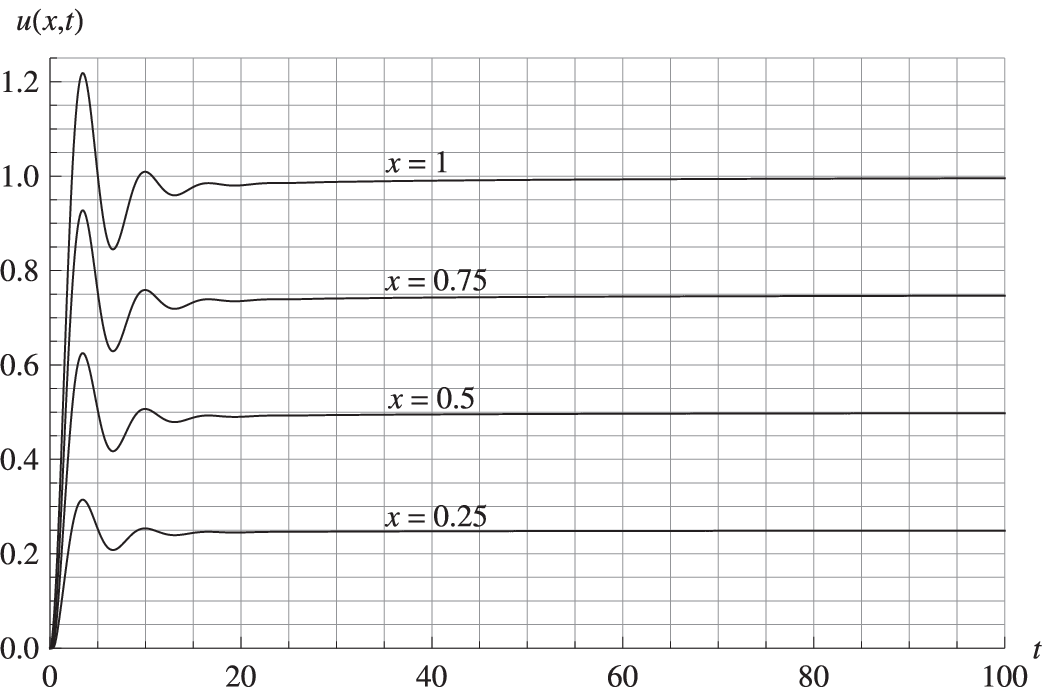}
 \caption{Displacement
 $u(x,t)$ in the creep experiment for $\varkappa=1$ as a function of time $t$ at
 $x\in \{0.25,0.5,0.75,1\}$ for $t\in (0,100)$.}
 \label{fig-9}
 \end{minipage}
\hfil
\begin{minipage}{72mm}
 \centering
 \includegraphics[scale=0.7]{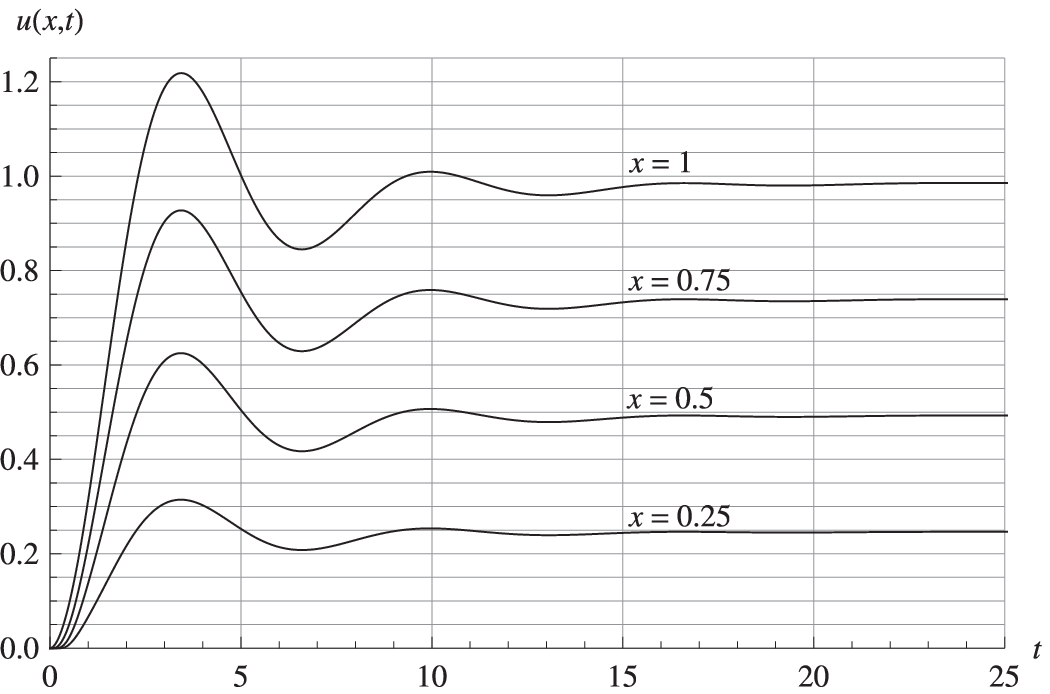}
 \caption{Displacement
 $u(x,t)$ in the creep experiment for $\varkappa=1$ as a function of time $t$ at
 $x\in \{0.25,0.5,0.75,1\}$ for $t\in (0,25)$.}
 \label{fig-12}
 \end{minipage}
\end{figure}

Figures \ref{fig-15} and \ref{fig-16} present the plots of time evolution of
displacement $u$ of a rod described by the Zener model for fixed point of
the rod $x=0.5$ if the ratio between masses $\varkappa $ varies. Here, the
parameters are $a=0.2,$ $b=0.6,$ $\alpha =0.45.$ One sees, Figure \ref%
{fig-15}, that there is no dependence of the finite value of displacement in
creep on the value of the mass ratio $\varkappa .$
\begin{figure}[h]
\centering
\includegraphics[scale=1]{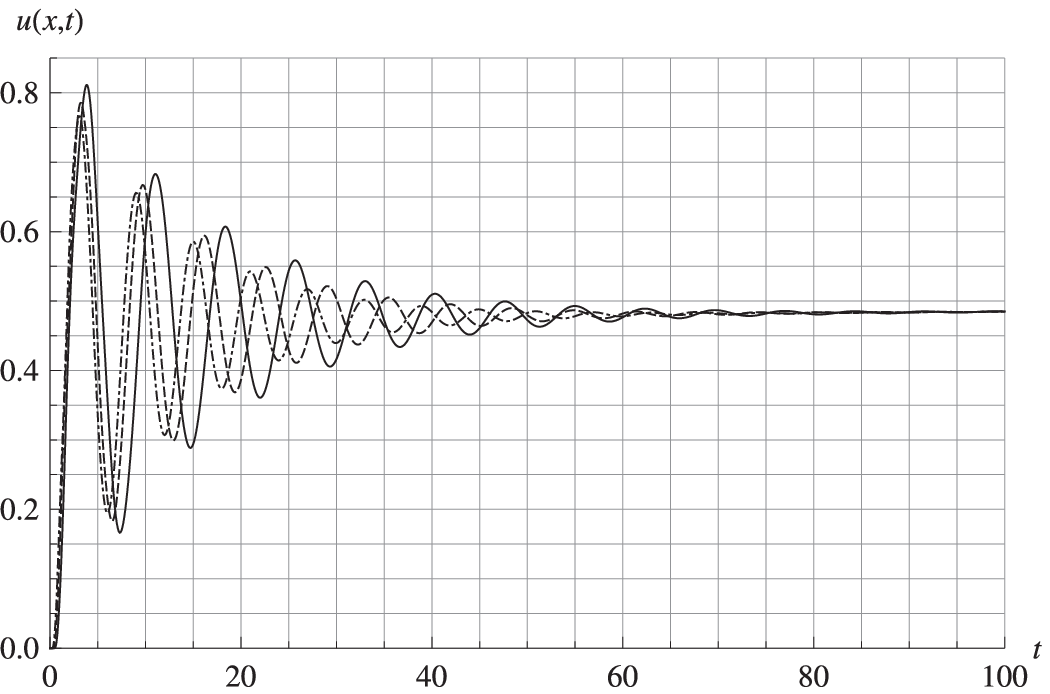}
\caption{Displacement $u(x,t)$ in the creep experiment at $x=0.5$ as a
function of time $t\in (0,100)$ for $\varkappa =0.5$ - dot-dashed line, $%
\varkappa =1$ - dashed line and $\varkappa =2$ - solid line.}
\label{fig-15}
\end{figure}
We see, Figure \ref{fig-16}, that for small times there is an influence of $%
\varkappa $ on the height of the peaks such that its height increases as $%
\varkappa $ increases. This is due to the inertia, while for larger times
the viscoelastic properties of the rod prevail. Similarly as in the previous
section the delay in the oscillations starting time-instant increases as $%
\varkappa $ increases.
\begin{figure}[h]
\centering
\includegraphics[scale=1]{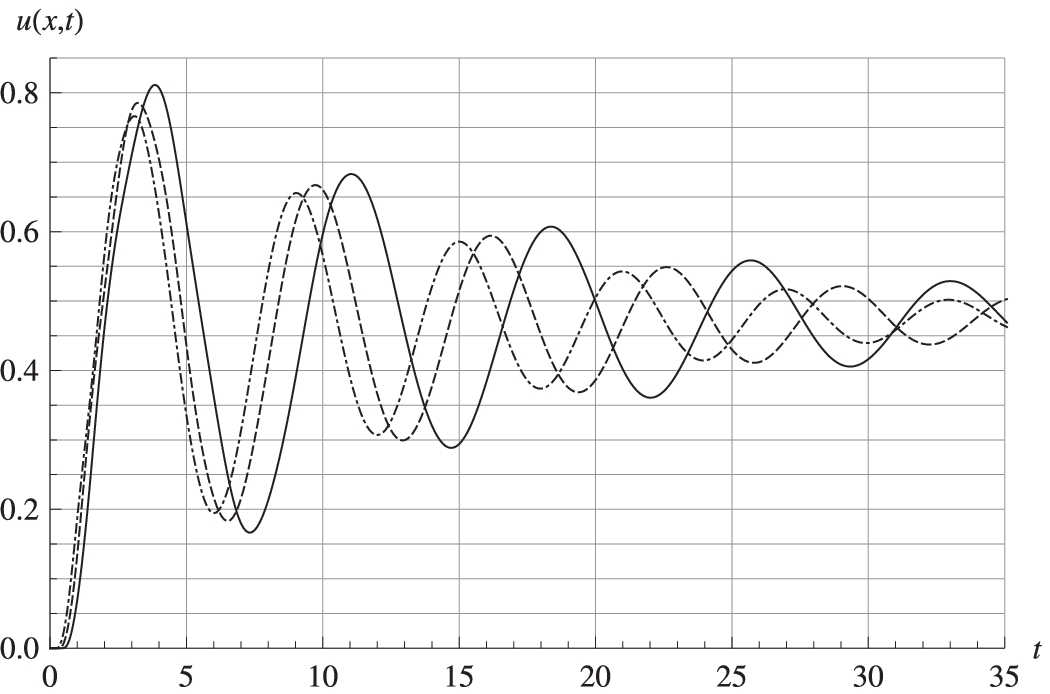}
\caption{Displacement $u(x,t)$ in the creep experiment at $x=0.5$ as a
function of time $t\in (0,35)$ for $\varkappa =0.5$ - dot-dashed line, $%
\varkappa =1$ - dashed line and $\varkappa =2$ - solid line.}
\label{fig-16}
\end{figure}

\begin{acknowledgement}
This research is supported by the Serbian Ministry of Education and Science
projects $174005$ (TMA and DZ) and $174024$ (SP), as well as by the
Secretariat for Science of Vojvodina project $114-451-2167$ (DZ).
\end{acknowledgement}

\end{document}